\newtheorem{theorem}{Theorem}[section]
{\theorembodyfont\rmfamily
\newtheorem{lemma}{Lemma}[section]

\newtheorem{corollary}{Corollary}

}
\newenvironment{proof}{\begin{paragraph}
          {Proof}}{\end{paragraph}}
\renewenvironment{abstract}
 {\small\begin{quote}{\textbf{Abstract}}\,\,}{\end{quote}}
\newenvironment{keywords}
 {\small\begin{quote}{\textbf{Keywords}}\,\,}{\end{quote}}
\newenvironment{classification}
 {\small\begin{quote}{\textbf{2010 Mathematics Subject Classification}}\,\,}{\end{quote}}
\date{}
\title{\vspace{-9ex}
{\centering
 \textbf{\large Variational inequality for perpetual American option price and convergence to the solution of the difference equation
}}}
\author{\small\textsf{\bfseries 
$^{1}$ Hyong-chol O, $^{2}$ Song-San Jo $^{2}$}\\[-.5ex]
{\footnotesize  ${}^{1, 2}$ Faculty of Mathematics, \textbf{Kim Il Sung} University,}
{\footnotesize   Pyongyang , D P R Korea}\\[-.5ex]
{\footnotesize e-mail:  $^{1}$hc.o@ryongnamsan.edu.kp }}
\begin{document}

\maketitle
\thispagestyle{empty}

\vspace{-.6cm}

\begin{abstract}
A variational inequality for pricing the perpetual American option and the corresponding difference equation are considered. First, the maximum principle and uniqueness of the solution to variational inequality for pricing the perpetual American option are proved. Then the maximum principle, the existence and uniqueness of the solution to the difference equation corresponding to the variational inequality for pricing the perpetual American option and the solution representation are provided and the fact that the solution to the difference equation converges to the viscosity solution to the variational inequality is proved. It is shown that the limits of the prices of variational inequality and BTM models for American Option when the maturity goes to infinity do not depend on time and they become the prices of the perpetual American option.
\end{abstract}

\begin{keywords}
perpetual American option;  variational inequality; explicit difference equation; maximum principle 
\end{keywords} 

\begin{classification}
35A35, 39A12, 62P05, 91B28
\end{classification}

%
%%%%%% Introduction %%%%%%%%%%% 
%

\section{Introduction}

\indent
Pricing financial derivatives is one of main topics in mathematical finance with clear implications in physics \cite{Miq1} and the American option is one of the widely studied financial derivatives.

\cite{Jia} studied discrete and differential equation models for pricing options and provided various pricing formulae. In particular, they provided the free boundary problem and variational inequality models for the prices of American options and studied some properties of the option prices. American options are one of financial instruments that the holder may exercise at expiry dates and any time before expiry dates. 

Perpetual American options are the American options without expiry dates, that is, perpetual American options can be exercised at any time in the future and the price functions are defined on infinite time intervals. A pricing model for the perpetual American options was studied in \cite{GS} using expectation method. Under the diffusion model, \cite{Jia} provided free boundary problem and variational inequality models for perpetual American options and the solution representation for the free boundary problem model. Under a general jump-diffusion risk model, \cite{Chi} studied perpetual American options. In \cite{JP},\cite{Miq1},\cite{Miq2} perpetual American options are studied within a market model that can be considered as a regime-switching models with only two states and where one of the transition probabilities is set to zero. In \cite{SW} perpetual American catastrophe equity put options in a jump-diffusion model are studied and \cite{SY} studied perpetual American executive stock options (ESO), \cite{AB1},\cite{AB2} studied call-put dualities when the local volatility function is modified.

On the other hand, financial contracts are discretely exercised, for example, every day, every month, every three months, every six months, every year and etc. So the discrete models are considered as more realistic models for financial derivatives than continuous models \cite{Jia}.

For perpetual American options, \cite{LL} provided the price representation and optimal exercise boundary by the binomial tree model which is one of most popular discrete models, and they studied the periodicity of the binomial tree price of perpetual Bermudan options. \cite{JD} found some clear relationship between the binomial tree method and a special explicit difference schemes for the variational inequality models for American option’s price and using it obtained the convergence of the binomial tree price for American option to the viscosity solution to the variational inequality model. This result is extended to the case with time dependent coefficients in \cite{OJK}.

In this article we are interested in the convergence of the binomial tree price for perpetual American option obtained in \cite{LL}. The price functions for American options are defined on finite time intervals whereas the price functions for perpetual American options are defined on infinite time intervals. So the theory developed in \cite{JD} or \cite{OJK} could not cover this problem and the solutions to discrete models such as the binomial tree model cannot be easily found step by step in difference schemes as shown in \cite{LL}.

In this article we obtain the maximum principle and uniqueness of the solution to variational inequality for pricing the perpetual American options. Then we study the maximum principle, the existence and uniqueness of the solution to the difference equation corresponding to the variational inequality for pricing the perpetual American option and the solution representation. (From the consideration in \cite{JD},\cite{Jia}, the binomial tree method of \cite{LL} can be considered as a special difference equation for the variational inequality neglecting infinitesimal and thus such study can be viewed as an extension of the results of \cite{LL}.) Then we study the convergence of the solution to the difference equation to the viscosity solution to the variational inequality. From this result, we have the convergence of the binomial tree method of \cite{LL} and such study can be viewed as an extension of the results of \cite{JD} to the case with the expiry time $T=\infty$.

In many literature including \cite{Jia} and \cite{LL}, the price of perpetual American option is modeled under the apriori assumtion that it does not depend on time. On the other hand, it is natural to consider  the price of perpetual American option as a limit of the price of American Option when the maturity goes to infinity. This approach excludes the apriori assumtion that the price of perpetual American option does not depend on time. In this paper we show that the limits of the prices of variational inequality and BTM models for American Option when the maturity goes to infinity do not depend on time.

The rest of this article is organized as follows. In Section 2 the maximum principle and the uniqueness of the solution to variational inequality for pricing the perpetual American option, the existence of the optimal exercise boundary are described. And then For the difference equation corresponding to the variational inequality for pricing the perpetual American option, the maximum principle, the existence of optimal exercise boundary, the existence, uniqueness and representation of the solution and the convergence of the approximated solution are discussed. Section 3 shows that the limits of the prices of variational inequality model, its explicit difference scheme and BTM for American Option when the maturity goes to infinity do not depend on time.

%
%%%%%% 2. Properties of the solution of perpetual American options %%%%%%%%%%% 
%

\section{The variational inequality for perpetual American option price and its difference equation}
\indent

\subsection{Some properties of the solution to the variational inequality for perpetual American option price}
Let $r$, $q$ and $\sigma$ be the interest rate, the dividend rate and the volatility of the underlying asset of option, respectively, then a variational inequality pricing model of American option is provided as follows:
\begin{equation}\label{eq 1}
\min\left\{-\frac{\sigma^2}{2}S^2\frac{d^2V}{dS^2}-(r-q)S\frac{dV}{dS}+rV,V-\psi\right\}=0,
\end{equation}
Here $\psi=(E-S)^+(for~put)$~~or~~$\psi=(S-E)^+(for~call).$

The Black-Scholes ordinary differential operator (BSOD operator) is defined as follows:

\begin{equation}\label{eq 2}
\-LV=-\frac{\sigma^2}{2}S^2\frac{d^2V}{dS^2}-(r-q)S\frac{dV}{dS}+rV
\end{equation}

If $V(S)$ is the solution of $\eqref{eq 1}$, then we have $V(S)=\psi,~-LV>0$ in stopping region, and $V(S)>\psi,~-LV=0$ in continuation region.[3]
Thus, the solution of \eqref{eq 1} is always nonnegative. Consider the BSOD operator $\eqref{eq 2}$ in the interval $A=(a,b),~(0\le{a<b}\le\infty).$

\begin{theorem}\label{theorem 1}
(Maximum principle of BSOD operator)
Suppose that $r>0$ and $V(S)\in{C^2(A)}$. If $-LV<(>)0,~S\in{A}$, then nonnegative maximum (nonpositive minimum) value of $V$ cannot be attained at the interior points of $A$. Moreover, if $-LV\le(\ge)0,~S\in{A}$, then we have
\begin{equation}\label{eq 3}
\sup_{x\in{A}}V(x)=\sup_{x\in{\partial{}A}}V^+(x)~~\left(\inf_{x\in{A}}V(x)=\inf_{x\in\partial{A}}V^-(x)\right).
\end{equation}
\end{theorem}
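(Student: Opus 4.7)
The plan is to prove the strict inequality version first by a direct pointwise calculus argument, then reduce the non-strict (boundary sup) version to the strict one by a small additive perturbation that exploits the hypothesis $r>0$. The minimum statement will follow from the maximum statement applied to $-V$, since $-L$ is linear and reverses sign.

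For Step~1, I assume $-LV<0$ on $A$ and suppose for contradiction that $V$ attains a nonnegative maximum at an interior point $S_{0}\in A$. Then $V(S_{0})\ge 0$, $V'(S_{0})=0$, and $V''(S_{0})\le 0$, so the definition of $-L$ in \eqref{eq 2} gives
\[
-LV(S_{0})=-\tfrac{\sigma^{2}}{2}S_{0}^{2}V''(S_{0})-(r-q)S_{0}\cdot 0+rV(S_{0})\ge 0,
\]
because both remaining summands are nonnegative (using $r>0$ for the second). This contradicts $-LV(S_{0})<0$, proving the strict maximum principle. Applying this conclusion to $-V$ (which satisfies $-L(-V)>0$) yields the strict minimum statement.

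For Step~2, I upgrade to the non-strict hypothesis $-LV\le 0$ by a perturbation. For $\varepsilon>0$ set $W_{\varepsilon}:=V-\varepsilon$. Since $-L$ is linear and $-L(\varepsilon)=r\varepsilon$, we obtain
\[
-LW_{\varepsilon}=-LV-r\varepsilon\le -r\varepsilon<0
\]
on $A$. By Step~1, $W_{\varepsilon}$ cannot attain a nonnegative maximum at any interior point. Given any $S_{0}\in A$ with $V(S_{0})>0$, for sufficiently small $\varepsilon$ we have $W_{\varepsilon}(S_{0})>0$, so the sup of $W_{\varepsilon}$ on $\bar{A}$ must be achieved on $\partial A$, yielding $V(S_{0})-\varepsilon\le\sup_{\partial A}V\le\sup_{\partial A}V^{+}$. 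Letting $\varepsilon\downarrow 0$ gives $V(S_{0})\le\sup_{\partial A}V^{+}$. For $S_{0}$ with $V(S_{0})\le 0$ the inequality is trivial because $\sup_{\partial A}V^{+}\ge 0$. This proves \eqref{eq 3}; the minimum identity follows by passing to $-V$.

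The main obstacle is the possible unboundedness of $A$, since $b=\infty$ and/or $a=0$ is allowed, so $\partial A$ must be interpreted as including limits at these endpoints and the sup of $W_{\varepsilon}$ need not be attained by a classical extreme-value argument. If the constant perturbation $W_{\varepsilon}=V-\varepsilon$ does not pin the maximum into a compact subset, one should replace the test function by a barrier of the form $\phi(S)=1+\delta(S^{\alpha}+S^{-\beta})$ with $\alpha,\beta>0$ small and $\delta>0$, chosen so that $-L\phi>0$ pointwise on $(a,b)$; then $V-\varepsilon\phi\to-\infty$ at the endpoints that escape to $0$ or $\infty$, forcing any interior maximum into a compact subinterval where Step~1 applies verbatim, and the same $\varepsilon\downarrow 0$ limit recovers \eqref{eq 3}.
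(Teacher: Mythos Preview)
Your proof is correct and follows essentially the same route as the paper: derive a contradiction at an interior nonnegative maximum from $V'(S_0)=0$, $V''(S_0)\le 0$, $rV(S_0)\ge 0$, and then pass from the non-strict hypothesis to the strict one via the perturbation $W_\varepsilon=V-\varepsilon$ using $-L\varepsilon=r\varepsilon>0$. You actually go a step beyond the paper by flagging the unbounded-domain issue (the paper allows $b=\infty$ but its argument silently assumes the supremum is attained) and sketching a barrier-function fix; the paper's own proof simply ignores this point.
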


\begin{proof} 
Suppose that $-LV<0$ but there is such an interior point $x_0\in{A}$ that

\begin{eqnarray*}
V(x_0)=\max_{x\in{A}}V(x)=M\le0
\end{eqnarray*}
Then $a<x_0<b$  and thus $V_S(x_0)=0$ and $V_{SS}(x_0)\le0$ and we have
\begin{eqnarray*}
\left.-LV=-\frac{\sigma^2}{2}S^2\frac{d^2V}{dS^2}-(r-q)S\frac{dV}{dS}+rV\right|_{S=x_0}=-\frac{\sigma^2}{2}S^2\frac{d^2V}{dS^2}(x_0)+rM\ge0
\end{eqnarray*}
This contradicts the assumption of $-LV<0$.
In the case of $-LV\le0$ ,let $u=V-\varepsilon$ , then $-Lu=-LV-r\varepsilon<0$ and we have $$\sup_{x\in{A}}u(x)=\sup_{x\in{\partial_pA}}u^+(x)$$
Thus we have
\begin{eqnarray*}
\sup_{x\in{A}}V(x)\le\sup_{x\in{A}}u(x)+\varepsilon=\sup_{x\in{\partial_pA}}u^+(x)\le\sup_{x\in{\partial_pA}}V^+(x)+\varepsilon. 
\end{eqnarray*}
Let $\varepsilon\rightarrow{0}$ ,then we proved $\eqref{eq 3}$.(QED)
\end{proof}

\begin{corollary}
(maximum principle for pricing function of perpetual American options)
\\
$V(S)$  the solution of $\eqref{eq 1}$ attains non-negative maximum value at the boundary points of $A=(a,b)$ ,where $A$ is a arbitrary subdomain of ().
\\
Proof.  If $V(S)$ is the solution of $\eqref{eq 1}$,then there would hold $-LV=0$ in the domain which holds $V>\phi$. Thus Theorem 1 leads to the result.
\end{corollary}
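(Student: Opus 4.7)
The plan is to obtain the corollary as a direct consequence of Theorem \ref{theorem 1}, after the preliminary observation that on the continuation region the operator $-L$ annihilates $V$. I read the phrase ``arbitrary subdomain of ()'' (where the parentheses look like a typo) in the spirit of the author's one-line sketch, namely as $A\subset\mathcal{C}:=\{S:V(S)>\psi(S)\}$.

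First I would extract two structural facts from the variational inequality \eqref{eq 1}. The identity $\min\{-LV,V-\psi\}=0$ forces $-LV\geq 0$ and $V\geq\psi$ pointwise, with at least one equality at every point; since $\psi=(E-S)^+$ or $\psi=(S-E)^+$ is non-negative, this also yields $V\geq 0$ throughout, so that $V^+\equiv V$. The complementarity then gives the stronger equality $-LV\equiv 0$ on the open set $\mathcal{C}$. Taking any open subinterval $A=(a,b)\subset\mathcal{C}$ with $0\leq a<b\leq\infty$, we have $-LV\leq 0$ on $A$, which is exactly the hypothesis of the ``$\sup$'' half of Theorem \ref{theorem 1}. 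Applying it yields
$$\sup_{S\in A}V(S)=\sup_{S\in\partial A}V^+(S)=\sup_{S\in\partial A}V(S),$$
the last equality being $V\geq 0$. This is the claimed attainment of the non-negative maximum on $\partial A$.

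The one delicate point I foresee is the interpretation of ``arbitrary subdomain''. If $A$ is allowed to cross the free boundary and meet the stopping region, a single invocation of Theorem \ref{theorem 1} no longer suffices: on the stopping part we only know $-LV\geq 0$, which is the wrong sign for the supremum statement, and moreover $V$ need not be $C^2$ at the free boundary so the regularity assumed in Theorem \ref{theorem 1} is not available across it. In that broader reading one would split $A$ along the free boundary, apply the maximum principle on each open piece contained in $\mathcal{C}$, and bound the supremum on $A\cap\{V=\psi\}$ using the explicit formula $V=\psi$ together with the monotonicity of $\psi$ on each side of the kink at $S=E$ (decreasing for the put, increasing for the call), which rules out an interior maximum of $\psi$. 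The genuine obstacle in that version is the free-boundary point itself, which is an interior point of $A$ not belonging to $\{a,b\}$; under the narrow reading consistent with the author's sketch, that issue does not arise and the argument is complete in the three lines above.
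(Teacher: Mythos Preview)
Your proposal is correct and follows exactly the paper's approach: observe that $-LV=0$ on the continuation region $\{V>\psi\}$ and then invoke Theorem~\ref{theorem 1}. Your additional remarks about $V\ge 0$ (hence $V^+=V$) and the caveat concerning subdomains that cross the free boundary are sound elaborations, but the core argument matches the paper's one-line proof.
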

\begin{lemma}
Assume that $V(S)$ is the price of perpetual American put options, that is, the solution to \eqref{eq 1}. If $S$ is in the stopping region, that is, $V(S)=(E-S)^+$, then $S\le{min}\{rE/(q)^+,E\}$. That is, the optimal exercise boundary could not greater than  $\min\{rE/(q)^+,E\}$. Here $(q)^+=
\begin{cases}
q,~~q\ge0\\
0,~~q<0
\end{cases}
$
\end{lemma}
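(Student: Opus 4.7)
The plan is to substitute the explicit form $V(S)=(E-S)^{+}$ into the differential-inequality half of the variational inequality \eqref{eq 1} and read off an algebraic constraint on $S$.

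First, the bound $S\le E$ follows from the observation that for $S>E$ the payoff $(E-S)^{+}$ is zero. Since the perpetual American put retains strictly positive value from the chance that the stock diffuses below $E$ in the future, $V(S)>0$ for every $S\in(0,\infty)$; consequently no $S>E$ can satisfy $V(S)=(E-S)^{+}=0$. Rigorously, this uses $V\ge\psi$ everywhere together with a minimum-principle argument in the spirit of Theorem \ref{theorem 1}, combined with the boundary behavior $V(0^{+})=E>0$.

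The main computation is then carried out in the regime $S<E$. There $V$ coincides with the smooth function $s\mapsto E-s$ on an interval of the stopping region, so $V'(S)=-1$ and $V''(S)=0$. Plugging into the BSOD operator \eqref{eq 2} reduces the condition $-LV(S)\ge 0$ coming from \eqref{eq 1} to $rE-qS\ge 0$, that is, $qS\le rE$. A case split on the sign of $q$ then concludes the argument: for $q>0$ one reads off $S\le rE/q=rE/q^{+}$, while for $q\le 0$ the inequality is automatic and, under the convention $rE/0=+\infty$, the bound $S\le rE/q^{+}$ is trivially valid. Combining with $S\le E$ gives $S\le\min\{rE/q^{+},E\}$.

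The main obstacle I expect is the smoothness justification for $V'$ and $V''$ inside the stopping region. A priori $V$ is only known to be $C^{2}$ on the continuation region, while on the stopping region it equals the piecewise-smooth payoff $\psi$. At interior points of the stopping region the derivatives of $V=E-s$ are classical and the computation above is rigorous; at the free boundary $S^{*}$ one must interpret $-LV\ge 0$ in a one-sided or viscosity sense, which is exactly the framework the paper is setting up.
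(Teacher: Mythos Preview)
Your proposal is correct and follows essentially the same approach as the paper: first observe $S\le E$ so that $V(S)=E-S$ in the stopping region, then substitute into $-LV\ge 0$ to obtain $rE-qS\ge 0$, and read off the bound $S\le rE/q$ when $q>0$. The paper's proof is terser (it cites \cite{Jia} for $S\le E$ and does not discuss the smoothness issue or the case $q\le 0$ explicitly), but the argument is the same.
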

\begin{proof} 
First, note that if $V(S)=(E-S)^+$, then $S\le{E}$ [Jia]. So we can rewrite as $V(S)=E-S$  On the other hand, $\eqref{eq 1}$ can be written as $$\min\left\{-LV,~V-\phi\right\}=0$$ 
Since $(V-\phi)(S)=0$, then $-LV|_S=rE-qS\ge0$, thus if $q\ge0$, then $S\le{rE/q}$.(QED)
\end{proof}

\begin{lemma} 
Let $V(S)$ be the price of perpetual American put options.
 \\
(i) If there is such a $S_0>0$ that $V(S_0)=(E-S_0)^+$, then we have  $V(S)=(E-S)^+$ for all $S(<S_0)$.
\\
(ii) If there is $S_1>0$ such that $V(S_1)=(E-S_1)^+$, then $V(S)>(E-S)^+$ for all $S(>S_1)$.
\end{lemma}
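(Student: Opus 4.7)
Both parts will rely on the maximum principle (Theorem 1) applied to the auxiliary function $W(S) = V(S) - (E-S)$, combined with the bound $S \le \min\{rE/(q)^+, E\}$ on the stopping region from the preceding lemma. The basic computation is
$$-LW = -LV + L(E-S) = -LV + (qS - rE),$$
so on any subinterval of the continuation region (where $-LV = 0$) one has $-LW = qS - rE \le 0$ whenever $S \le rE/(q)^+$.

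For part (i), I would argue by contradiction. Assume there is some $S_* \in (0, S_0)$ with $V(S_*) > (E-S_*)^+$. By continuity, the open set $\{V > \psi\}$ contains a connected component $(a,b) \subset (0, S_0]$ on which the variational inequality reduces to $-LV = 0$ and at whose endpoints $V$ coincides with $\psi$. Since $S_0 \le E$ we have $\psi = E-S$ on $(0, S_0)$, so $W$ vanishes at the endpoints of $(a,b)$ and is strictly positive on the interior; since $S_0 \le rE/(q)^+$, also $-LW \le 0$ on $(a,b)$. Theorem 1 then yields the contradiction $\sup_{(a,b)} W \le \sup_{\{a,b\}} W^+ = 0$.

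For part (ii), I would again argue by contradiction: suppose some $S_* > S_1$ satisfies $V(S_*) = (E-S_*)^+$. If $S_* \ge E$, then $V(S_*) = 0$, which I would rule out using the strict positivity of the perpetual put price on $(0,\infty)$ (obtained from $V(0) = E$ and the strong maximum principle inside any continuation subcomponent where $-LV = 0$). If $S_1 < S_* < E$, then $S_* \le rE/(q)^+$ by the preceding lemma, and applying the construction of part (i) to each continuation subcomponent of $(S_1, S_*)$ shows that no such subcomponent exists; hence $V \equiv E-S$ on all of $(S_1, S_*)$. The main obstacle is turning this into the stated contradiction: the conclusion $V = \psi$ on $(S_1, S_*)$ is consistent with, rather than contradicting, the hypothesis $V(S_1) = \psi(S_1)$ as written. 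I expect the resolution to require reading the hypothesis of (ii) as implicitly selecting $S_1$ to be the supremum of the stopping region $\{S > 0 : V(S) = (E-S)^+\}$ (which is finite by the preceding lemma); under that reading, $V \equiv \psi$ on $(S_1, S_*)$ directly contradicts the maximality of $S_1$, and combined with (i) the stopping and continuation regions are then exactly $(0, S_1]$ and $(S_1, \infty)$, which is the structural picture the lemma is aimed at.
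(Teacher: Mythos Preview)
Your argument for part (i) is correct and essentially identical to the paper's: take a maximal continuation subinterval $(a,b)$ inside $(0,S_0)$, compute $-L(V-(E-S))=qS-rE<0$ there (using $S<S_0\le rE/(q)^+$ and $r>0$), and apply the maximum principle with zero boundary values to contradict $V-\psi>0$ in the interior.

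For part (ii) you have correctly spotted that the hypothesis as printed makes the statement unprovable, but your proposed repair is more elaborate than what the paper intends. The paper's proof reveals that the hypothesis of (ii) contains a typo: it should read $V(S_1)>(E-S_1)^+$, not $V(S_1)=(E-S_1)^+$. With that correction, (ii) is simply the contrapositive of (i): if some $S_2>S_1$ had $V(S_2)=(E-S_2)^+$, then (i) applied with $S_0=S_2$ would force $V(S_1)=(E-S_1)^+$, contradicting the (corrected) hypothesis. This is a one-line argument; there is no need to invoke strict positivity of $V$, to treat the case $S_*\ge E$ separately, or to reinterpret $S_1$ as the supremum of the stopping region. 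That last reinterpretation does give a true statement, but it is not what the paper proves here --- the supremum construction is carried out separately in the existence theorem that immediately follows the lemma.
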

\begin{proof}
(i)From Lemma 1, we have $S_0\le{rE/q}$ .
If the conclusion were not true, that is, we assume that
\begin{equation}\label{eq 4}
0<\exists{S<S_0}:V(S)>E-S.
\end{equation}
Let $(a,b)$ be the largest interval where $\eqref{eq 4}$ holds.
Then we have $V(S)=E-S$ when $S=a$ or $S=b$.
From \eqref{eq 4} and $\min\{{-LV,~V-\phi}\}=0$ , we have $-LV=0$ and $V-(E-S)>0$ on the interval $(a,b)$.
Thus we have $$-L(V-(E-S))=L(E-S)=qS-rE<0~(\because~q\ge{0}\Rightarrow{S<S_0\le{rE/q}}).$$
So from Theorem 1, non-negative maximum value of $V-(E-S)$ is attained at the boundary points.
Since $V(a)-\phi{(a)}=0,~V(b)-\phi{(b)}=0$, we have $V(S)-\phi{(S)}\le0$ on the interval $(a,b)$.
This contradicts $\eqref{eq 4}$.

(ii) If $\exists{S_2>S_1}:V(S_2)=(E-S_2)^+$, then from the conclusion of (i) we have $V(S_1,t)=(E-S_1)^+$ which contradicts the assumption.(QED)

From lemma 2 and its corollary, the existence of the optimal exercise boundary is proved.
\end{proof}
\begin{theorem} 
(Existence of the optimal exercise boundary)
Let $V(S)$ be the price function of perpetual American put option. Then there exists an optimal exercise boundary  $S_0$.
\end{theorem}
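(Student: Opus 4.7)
The plan is to exhibit an explicit candidate for $S_0$ as the supremum of the stopping set and then read off the separating property from Lemmas 1 and 2. Concretely, I would define
\[
\mathcal{S} := \{S > 0 : V(S) = (E-S)^+\}, \qquad S_0 := \sup \mathcal{S},
\]
and aim to verify four items: (a) $\mathcal{S}$ is non-empty, (b) $S_0$ is finite, (c) $(0, S_0) \subseteq \mathcal{S}$, and (d) $V(S) > (E-S)^+$ for every $S > S_0$. Together (c) and (d) say precisely that $S_0$ partitions $(0,\infty)$ into the stopping and continuation regions, which is the meaning of \textit{optimal exercise boundary}.

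Items (b), (c), (d) are essentially bookkeeping on the earlier lemmas. For (b), Lemma 1 gives $S \le \min\{rE/(q)^+,E\} < \infty$ for every $S \in \mathcal{S}$, so the supremum is bounded by the same quantity. For (c), pick any $S < S_0$; by definition of supremum there is $S_1 \in \mathcal{S}$ with $S < S_1 \le S_0$, and Lemma 2(i) applied at $S_1$ yields $V(S) = (E-S)^+$. For (d), if some $S^* > S_0$ satisfied $V(S^*) = (E-S^*)^+$, then $S^* \in \mathcal{S}$ would contradict $S_0 = \sup \mathcal{S}$; combined with the universal inequality $V \ge \psi$, this delivers the strict inequality on $(S_0,\infty)$, which is also the content of Lemma 2(ii).

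The one step that is not a tautology, and which I expect to be the main obstacle, is (a): ensuring the stopping region is non-empty so that the supremum really is positive. The natural route is the boundary behaviour of $V$ at $S = 0^+$. Since $V \ge \psi$, one has $V(S) \ge E - S$ for small $S$, while on the other hand the perpetual put price is bounded above by the strike $E$ (holding $E$ in cash dominates); continuity of $V$ at $0^+$ then forces $V(S) = E - S$ on a right-neighbourhood $(0,\delta)$, giving $(0,\delta) \subseteq \mathcal{S}$. An equivalent contradiction argument rules out $\mathcal{S} = \emptyset$: if so, then $-LV = 0$ on all of $(0,\infty)$ and the general solution of this Euler-type ODE is a linear combination $A S^{\alpha_+} + B S^{\alpha_-}$ with $\alpha_+ > 0 > \alpha_-$, whose asymptotics at $0$ and $\infty$ cannot simultaneously match the required $V(0^+) = E$ and boundedness (in fact $V(\infty)=0$ for a put). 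Either route establishes (a), and together with (b)--(d) the theorem follows.
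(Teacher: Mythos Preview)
Your approach is the dual of the paper's: you set $S_0=\sup\{S>0:V(S)=\psi(S)\}$, whereas the paper takes $S_0$ to be the infimum of the continuation region, having first noted that $(E,\infty)$ lies in that region by Lemma~1. Either way, Lemma~2 then yields the partition $(0,S_0)\cup(S_0,\infty)$, and your items (b)--(d) are exactly this bookkeeping.

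Where you go beyond the paper is item~(a). The paper's two-line proof does not argue that the stopping set is non-empty; it simply takes the infimum of the (non-empty) continuation region and invokes Lemma~2, leaving open the possibility $S_0=0$. Your ODE contradiction argument correctly closes this gap: if $\mathcal{S}=\emptyset$ then $-LV=0$ on all of $(0,\infty)$, and the general solution $AS^{\alpha_+}+BS^{\alpha_-}$ of this Euler equation cannot be bounded at both ends unless $A=B=0$, contradicting $V\ge\psi$.

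One caution on your first route to (a): the squeeze $E-S\le V(S)\le E$ together with continuity only yields $V(0^+)=E$, not $V(S)=E-S$ on a right-neighbourhood (e.g.\ $V(S)=E-S/2$ would survive the squeeze). So that argument, as written, does not establish (a); the ODE argument is the one that carries the weight, and you should either drop the first route or sharpen it.
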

\begin{proof}
The interval $(E,+\infty)$ is contained to the continuation region. Let $S_0$ be the infimum of the continuation region, then from lemma 2, $(S_0,+\infty)$ is the continuation region and $(0,S_0)$ is the stopping region. (QED)
\end{proof}
\begin{theorem}
(Uniqueness of the solution to variational inequality for the price of perpetual American put option)
\eqref{eq 1} has at most one solution.
\end{theorem}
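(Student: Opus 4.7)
The plan is to apply the maximum principle of Theorem 1 to the difference of two candidate solutions. Assume $V_1$ and $V_2$ both solve \eqref{eq 1} with the natural boundary behavior inherited from the put context, namely $V(0^+)=E$ and $V(S)\to 0$ as $S\to\infty$. Set $W=V_1-V_2$; it suffices to show $W\le 0$ on $(0,\infty)$, since the symmetric argument then gives $V_1\equiv V_2$.

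The key step is to introduce the open set $A=\{S\in(0,\infty):V_1(S)>V_2(S)\}$ and observe that $W$ is a subsolution of $-L$ on $A$. Indeed, at any $S\in A$ one has $V_1(S)>V_2(S)\ge\psi(S)$, so $V_1$ lies strictly above the obstacle and the complementarity in \eqref{eq 1} forces $-LV_1=0$; meanwhile $-LV_2\ge 0$ everywhere. Subtracting yields $-LW=-(-LV_2)\le 0$ on $A$, so Theorem 1 applies on each connected component $(a,b)\subset A$ and the nonnegative supremum of $W$ on $\overline{(a,b)}$ is attained at $\{a,b\}$. At any interior boundary point $S_*\in(0,\infty)$ of such a component, continuity and maximality of the component force $V_1(S_*)=V_2(S_*)$, so $W=0$ there; at an end coinciding with $0$ or $+\infty$, the common limiting values of $V_1$ and $V_2$ give $W\to 0$ as well. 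Hence $\sup_A W\le 0$, contradicting the strict inequality $W>0$ on $A$ unless $A=\emptyset$. Interchanging $V_1$ and $V_2$ then completes the proof.

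The main obstacle will be the treatment of the boundary behavior at $S=0$ and $S=+\infty$, which is not spelled out inside \eqref{eq 1}: one must infer $V(0^+)=E$ from the intrinsic bound $\psi\le V$ together with the fact that a neighborhood of $0$ sits in the stopping region (Lemma 1 and Theorem 2 localize the optimal exercise boundary $S_0$ strictly inside $(0,E]$), and $V(+\infty)=0$ from the inclusion $(E,+\infty)\subset$ continuation region (used in the proof of Theorem 2) combined with the a priori bound $0\le V\le E$. Once this is in hand, unbounded connected components of $A$ can be handled by truncating to $(\varepsilon,N)$ and letting $\varepsilon\to 0$, $N\to\infty$, which reduces the argument to the bounded-interval form in which Theorem 1 was stated. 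No further ingredient beyond Theorem 1 and the obstacle condition $V\ge\psi$ is needed, which is why I expect the entire difficulty to concentrate in justifying those limits.
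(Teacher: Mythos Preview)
Your argument is correct and follows the standard comparison route for obstacle problems: on the set $A=\{V_1>V_2\}$ you have $V_1$ strictly above the obstacle (so $-LV_1=0$) while $-LV_2\ge 0$ always, hence $-L(V_1-V_2)\le 0$ on $A$, and the maximum principle together with $W=0$ on $\partial A$ forces $A=\emptyset$. The paper proceeds differently: it first uses Theorem~2 to produce the exercise boundaries $S_1,S_2$ of the two solutions, assumes $S_1>S_2$, and works on the fixed interval $(S_2,\infty)$. There it shows $-L(V_1-V_2)\ge 0$ (the opposite sign to yours, on a different region) by computing $-L\psi=rE-qS\ge 0$ explicitly on $(S_2,S_1]$---this is where Lemma~1's bound $S_1\le rE/q$ enters---and reaches a contradiction with $(V_1-V_2)(S_1)=\psi(S_1)-V_2(S_1)<0$. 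Your route is the more economical one: the core inequality $-LW\le 0$ on $A$ needs neither the explicit form of $\psi$ nor the location of the exercise boundaries, and would carry over to general obstacles. The paper's approach trades this generality for a sharper intermediate conclusion (the two free boundaries must coincide) and avoids having to discuss arbitrary connected components of an unknown open set. Both arguments lean equally on the boundary behavior $V(0^+)=E$, $V(+\infty)=0$, which---as you correctly identify---is the only point requiring care.
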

\begin{proof}
Proof. Let $V_1$ and $V_2$ be the two solutions of $\eqref{eq 1}$ and $S_1,S_2$ be the optimal exercise boundaries of two solutions, respectively.
Without loss of generality, assume that $S_1>S_2$.  
Then we have $V_1=V_2=\phi$ for $S<S_2$ and $LV_1=LV_2=0$ for $S>S_1$. On the other hand, for $S_2<S\le{S_1}$, we have $V_1=\phi,~LV_2=0,~V_2>\phi$. 
Now consider the interval $(S_2,\infty)$. When $S_2<S\le{S_1}$, we have
 \begin{eqnarray*}
-L(V_1-V_2)=-L\phi=-L\phi=-qS+rE\ge0 (\because q\ge0\Rightarrow{S_1}\le{rE/q}).
\end{eqnarray*}
When $S>S_1$, we have $-L(V_1-V_2)=0$. Thus from the maximum principle (Theorem 1), the non-positive minimum value of $V_1-V_2$ in the interval $[S_2,~\infty)$ is attained at the boundary. But we have $(V_1-V_2)(S_2)=0$ and $(V_1-V_2)(\infty)=0$, which means that $(V_1-V_2)\le0$ in $[S_2,~0)$. This contradicts the fact that $(V_1-V_2)(S_1)=\phi{(S_1)}-V_2(S_1)<0$. So we have $S_1=S_2$.
In $(0,~S_1]$, two solutions are equal to $\phi$, thus, two solutions are coincided. 
In $(S_1,~\infty)$, we have $V_1-V_2=0$ and on the boundary $V_1-V_2=0$. From theorem 1, we have $V_1=V_2$.(QED)
\end{proof}
{\\}

{\bf Remark 2.1}
The solution to the free boundary problem of perpetual American option price considered in [jia] satisfies the variational inequality \eqref{eq 1}, so the existence of the solution to \eqref{eq 1} is already known. So the problem of uniqueness and existence of the solution to the variational inequality pricing model of perpetual American put option and the existence of the optimal exercise boundary are completely solved.

\subsection{Properties of the difference equation of the variational inequality for the price of of perpetual American put options}

Using transformation:
\begin{eqnarray*}
u(x)=V(S),  S=e^x,  \varphi=(E-e^x)^+;  S_0=e^c.
\end{eqnarray*}
the variational inequality \eqref{eq 1} becomes the following problem
\begin{equation}\label{eq 5}
\min\left\{\frac{-\sigma^2}{2}\frac{\partial^2{u}}{\partial{x}^2}-(r-q-\frac{\sigma^2}{2})\frac{\partial{u}}{\partial{x}},~u-\varphi\right\}=0,~~x\in(-\infty, +\infty)
\end{equation}
\begin{eqnarray*}
u(-\infty)=E,~~u(+\infty)=0
\end{eqnarray*}
For $\Delta{x}>0$, let $u_j=u(j\Delta{x}+c),~\varphi_j=\varphi(j\Delta{x}+c)$ and use  difference quotients
\begin{eqnarray*}
\left(\frac{\partial{u}}{\partial{x}}\right)_j=\frac{u_{j+1}-u_{j-1}}{2\Delta{x}},~~\left(\frac{\partial^2{u}}{\partial{x}^2}\right)_j=\frac{u_{j+1}-2u_j+u_{j-1}}{2\Delta{x}^2}.
\end{eqnarray*}
in \eqref{eq 5}, then we have the following difference equation:
\begin{equation}\label{eq 6}
\min\left\{-\frac{\sigma^2}{2}\frac{u_{j+1}-2u_j+u_{j-1}}{2\Delta{x}^2}-(r-q-\frac{\sigma^2}{2})\frac{u_{j+1}-u_{j-1}}{2\Delta{x}}+ru_j,~~u_j-\varphi_j\right\}=0,~~~j\in{Z}
\end{equation}
\begin{eqnarray*}
u_j=E,~~j=-\infty:~~u_j=0,~~j=+\infty.
\end{eqnarray*}
Here, let $\alpha=\Delta{t}$ and consider the fact that $\min(A,B)=0\Leftrightarrow\min(\alpha{A},B)=0~~(\alpha>0)$ .
Then we have
\begin{eqnarray*}
&&\min\left\{-\frac{\sigma^2\Delta{t}}{2\Delta{x}^2}(u_{j+1}-2u_j+u_{j-1})-(r-q-\frac{\sigma^2}{2})\frac{\Delta{t}}{2\Delta{x}}(u_{j+1}-u_{j-1})+r\Delta{t}u_j,~~u_j-\varphi_j\right\}=0\\
&&u_j=E,~~j=-\infty:~~u_j=0,~~j=+\infty
\end{eqnarray*}
Consider that
\begin{eqnarray*}
&&-\frac{\sigma^2\Delta{t}}{2\Delta{x}^2}(u_{j+1}-2u_j+u_{j-1})-(r-q-\frac{\sigma^2}{2})\frac{\Delta{t}}{2\Delta{x}}(u_{j+1}-u_{j-1})+r\Delta{t}u_j=
\\
&&=(1+r\Delta{t})u_j-\left(1-\frac{\sigma^2\Delta{t}}{\Delta{x}^2}\right)u_j-\frac{\sigma^2\Delta{t}}{\Delta{x}^2}\left[\left(\frac{1}{2}+(r-q-\frac{\sigma^2}{2})\frac{\Delta{x}}{2\sigma^2}\right)u_{j-1}\right]
\end{eqnarray*}
and denote
$$
w=\frac{\sigma^2\Delta{t}}{\Delta{x}^2},~~a=\frac{1}{2}+(r-q-\frac{\sigma^2}{2})\frac{\Delta{x}}{2\sigma^2},~~\rho=1+r\Delta{t}.
$$
Then we have
\begin{equation*}
\min\left\{\rho{u_j}-(1-w)u_j-w[au_{j+1}+(1-a)u_{j-1}],~~u_j-\varphi_j\right\}=0
\end{equation*}
Here, consider $min(A,B)\Leftrightarrow\min(\alpha{A},B)=0,~~(\alpha>0)$ again, then we have
$$
\min\left\{u_j-\frac{1}{\rho}\left[(1-w_j)u_j-w[au_{j+1}+(1-a)u_{j-1}]\right],~~u_j-\varphi_j\right\}=0
$$
Then, due to $\min(C-A,C-B)=0\Leftrightarrow{C=max(A,B)}$, we have a following difference equation equal to $\eqref{eq 6}$.
\begin{equation}\label{eq 7}
u_j=max\left\{\frac{1}{\rho}[{(1-w)u_j+w[au_{j+1}+(1-a)u_{j-1}]]},~\varphi\right\}~~j\in{Z}
\end{equation}
$$
u_j=E,~j=-\infty;u_j=0,~j=+\infty
$$
which is equivalent to \eqref{eq 6}

Now, we denote the {\bf discrete Black-Scholes operator} as follows:
$$
I_j(U)=(\rho+w-1)U_j-w[aU_{j+1}+(1-a)U_{j-1}],~j\in{Z},~U\in{l_\infty(Z)}
$$
Here $l_\infty(Z)$ is the Banach space of all bounded two sided sequences of real numbers.

\begin{theorem}
(Maximum principle for the discrete Black-Scholes operator) Assume that $\Delta{x}$ is small enough. Then we have the following two facts:
\\
(i)	If on some “interval” $A$ of $Z$, we have $I_j(U)\le0,j\in{A}$, then the maximum value of $U$  cannot be attained at interior of $A$. That is, there does not exist such $j\in{A}$ that $U_{j-1}<U_j$ and $U_j>U_{j+1}$ hold at the same time.
\\
(ii)	If on some “interval” $A$ of $Z$, we have $I_j(U)\ge0,j\in{A}$, then the minimum value of $U$  cannot be attained at interior of $A$. That is, there does not exist such $j\in{A}$ that $U_{j-1}>U_j$ and $U_j<U_{j+1}$ hold at the same time.
\end{theorem}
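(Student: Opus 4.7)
The plan is to recognize the discrete operator $I_j$ as a comparison of $\rho U_j$ with a three-point convex combination of the neighboring values, and then run a standard discrete maximum principle argument by contradiction, paralleling the continuous argument in Theorem 1.

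First I would rewrite
$$I_j(U) = \rho U_j - \bigl[(1-w)U_j + wa\,U_{j+1} + w(1-a)\,U_{j-1}\bigr].$$
For $\Delta x$ small enough, the coefficient $a = \tfrac12 + (r-q-\tfrac{\sigma^2}{2})\tfrac{\Delta x}{2\sigma^2}$ lies strictly between $0$ and $1$; combined with an (implicit) CFL-type restriction $w = \sigma^2\Delta t/\Delta x^2 < 1$, the three weights $1-w$, $wa$, $w(1-a)$ are strictly positive and sum to $1$, so the bracketed expression is a genuine convex combination of $U_{j-1},U_j,U_{j+1}$.

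For part (i), I would argue by contradiction. Assume some interior $j_0\in A$ satisfies $U_{j_0-1}<U_{j_0}$ and $U_{j_0+1}<U_{j_0}$. Strict positivity of the weights $wa$ and $w(1-a)$ forces the convex combination to be strictly less than $U_{j_0}$, whence
$$I_{j_0}(U) > \rho U_{j_0} - U_{j_0} = r\Delta t\cdot U_{j_0}.$$
The hypothesis $I_{j_0}(U)\le 0$ then forces $U_{j_0}<0$. Mirroring the device used in the proof of Theorem 1, I would promote this to the stated conclusion by passing from $U$ to $\tilde U = U+\varepsilon$, using $I_j(\tilde U)=I_j(U)+r\Delta t\,\varepsilon$ to make the inequality strict, and letting $\varepsilon\downarrow 0$ to rule out attainment of the (nonnegative) maximum at any interior point. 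Part (ii) is completely symmetric, either by repeating the argument with reversed inequalities or by applying (i) to $-U$ and noting $I_j(-U)=-I_j(U)$.

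The main obstacle I expect is the bookkeeping of signs: the clean convex-combination inequality only shows that any strict interior local maximum must be \emph{negative}, which is weaker than the literal phrasing of the conclusion. Reconciliation requires either interpreting ``maximum'' in the nonnegative sense of Theorem 1 or explicitly running the $\varepsilon$-perturbation to convert the weak inequality into a strict one. A smaller but related issue is making the parameter regime precise: the statement's ``$\Delta x$ small enough'' is sufficient for $0<a<1$, but a stability bound such as $\sigma^2\Delta t \le \Delta x^2$ is also needed to secure $w<1$ so that all three weights in the convex combination are positive; without this the rewriting no longer presents $I_j$ as identity minus an averaging, and the argument breaks.
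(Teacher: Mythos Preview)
Your approach is essentially the paper's: assume a strict interior local maximum, use the convex combination of neighbors to force $I_j(U)>r\Delta t\,U_j$, and reach a contradiction; part (ii) is obtained by applying (i) to $-U$. Two small points of comparison are worth noting.

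First, the paper works with the \emph{two-point} average $aU_{j+1}+(1-a)U_{j-1}$ rather than your three-point version. From $U_{j\pm1}<U_j$ and $0<a<1$ one gets $aU_{j+1}+(1-a)U_{j-1}<U_j$ directly, and then $I_j(U)=(\rho+w-1)U_j-w[\,\cdot\,]>(\rho-1)U_j=r\Delta t\,U_j$ using only $w>0$. This sidesteps the CFL-type restriction $w\le 1$ that your three-point convex-combination reading needs; so the extra hypothesis you flag is not actually required.

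Second, your worry about signs is legitimate, and in fact the paper's own proof has exactly the gap you anticipate: it passes from $w[aU_{j+1}+(1-a)U_{j-1}]<wU_j$ to $wU_j<wU_j+r\Delta t\,U_j$ without comment, which tacitly assumes $U_j>0$. So the literal statement (no strict interior local maximum whatsoever) is not what the argument delivers; what is actually shown is that any strict interior local maximum must be negative, i.e.\ the \emph{nonnegative} maximum cannot be an interior strict local maximum. The paper does not run an $\varepsilon$-perturbation here; in the later applications (Theorem~5, Lemma~4) the sequences in play are nonnegative or one has the strict inequality $I_j<0$, so the issue does not bite. If you do want to patch it in the style of Theorem~1, note that the correct shift is $\tilde U=U-\varepsilon$ (so that $I_j(\tilde U)=I_j(U)-r\Delta t\,\varepsilon<0$), not $U+\varepsilon$.
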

\begin{proof}
If $\Delta{x}$ is small enough, then $0<a<1$ in $\eqref{eq 7}$.
Suppose that the conclusion of (i) were not true. That is, assume that $\exists{j}\in{A},~U_{j-1}<U_j,~U_{j+1}<U_j$. Then $aU_{j+1}+(1-a)U_{j-1}<U_j$ and $w>0$, so we have
\begin{eqnarray*}
&&w[aU_{j+1}+(1-a)U_{j-1}]<wU_j<wU_j+r\Delta{t}U_j\Rightarrow
\\
&&\Rightarrow(w+r\Delta{t})U_j-w[aU_{j+1}+(1-a)U_{j-1}]>0\Rightarrow
\\
&&\Rightarrow{I_j(U)}=(\rho+w-1)U_j-w[aU_{j+1}+(1-a)U_{j-1}]>0
\end{eqnarray*}
This contradicts $I_j(U)<0,~j\in{A}$. Thus, we proved the conclusion of (i).
The conclusion of (ii) is derived by applying the result of (i) to $-U$.(QED)
\end{proof}
{\\}

By using this theorem, we can prove the maximum principle of the solution to difference equation \eqref{eq 7} for the variational inequality for perpetual American put option price.

\begin{theorem}
(Maximum principle for the difference equation for the variational inequality for perpetual American put option price) 
Let $U={U_j}\in{Z}$ be the solution of $\eqref{eq 7}$. Then, the maximum and minimum value of $U$ on any subinterval $A=[j_1,j_2]$ of $\sum_1=\{j\in{Z}|U_j>\varphi_j\}$ is attained at the boundary. Thus $U_j$ is the monotone with respect to j on $\sum_1$.
\end{theorem}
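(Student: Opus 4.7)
The plan is to reduce the statement to the discrete Black--Scholes maximum principle (the preceding theorem) by showing that $I_j(U) \equiv 0$ on $\Sigma_1$, so that parts (i) and (ii) both apply simultaneously on any subinterval $A$. On $\Sigma_1$ the strict inequality $U_j > \varphi_j$ forces the $\max$ in \eqref{eq 7} to be attained by the first argument, giving $\rho U_j = (1-w)U_j + w[aU_{j+1}+(1-a)U_{j-1}]$, which rearranges to $I_j(U) = 0$. Consequently, on any $A = [j_1,j_2] \subset \Sigma_1$ both $I_j(U) \le 0$ and $I_j(U) \ge 0$ hold, and the preceding theorem rules out strict interior local maxima and strict interior local minima of $U$ on $A$.

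I would then upgrade ``no strict interior extremum'' to genuine boundary attainment using the non-negativity $U \ge \varphi \ge 0$ together with the identity $I_j(U) = 0$. Suppose $M = \max_A U$ were attained only at interior indices, and let $j^*$ be the rightmost such index; then $U_{j^*-1} \le M$ and $U_{j^*+1} < M$, and substituting into $(\rho+w-1)M = w[aU_{j^*+1}+(1-a)U_{j^*-1}]$ yields $r\Delta t\,M \le wa(U_{j^*+1}-M) < 0$, contradicting $M \ge 0$. A parallel argument applied to the leftmost index realizing the minimum, combined with the asymptotic conditions $u(-\infty)=E$ and $u(+\infty)=0$ to discard the degenerate configuration in which a plateau of the minimum could stretch to infinity, handles the minimum. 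Boundary attainment on every subinterval then gives monotonicity on each connected component of $\Sigma_1$: a non-monotone sequence would contain $i_1 < i_2 < i_3$ with $U_{i_2}$ strictly above or strictly below both $U_{i_1}$ and $U_{i_3}$, placing the extremum of $[i_1, i_3]$ strictly inside and contradicting the boundary attainment just established.

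The principal obstacle is the second step. The discrete maximum principle as stated forbids only strict interior local extrema, which a priori still permits constant-valued plateaus realizing the extremum inside $A$; bridging this gap to the sharper boundary attainment required here is what does the real work. The maximum side is settled cleanly by the non-negativity of $U$ and the precise form of $I_j(U) = 0$, as sketched; the minimum side is more delicate and leans on the global boundary behavior of $U$ at $\pm\infty$ to rule out problematic flat configurations before Theorem 2 can finish the job.
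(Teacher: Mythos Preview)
Your first paragraph is precisely the paper's entire proof: on $\Sigma_1$ the strict inequality $U_j>\varphi_j$ forces the first branch of \eqref{eq 7}, giving $I_j(U)=0$, and then Theorem~4 is invoked. The paper stops there and declares QED.

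Your second and third paragraphs go beyond the paper. You correctly observe that Theorem~4, as stated and proved, rules out only \emph{strict} interior local extrema (the clarifying clause ``That is, there does not exist such $j$\dots'' makes this explicit), so a constant plateau realizing the extremum in the interior is not directly excluded; the paper simply glosses over this distinction. Your argument for the maximum---using $U\ge 0$ together with the exact identity $I_j(U)=0$ at the rightmost maximizing index---is a clean and correct way to close that gap. Your treatment of the minimum is, as you yourself flag, less complete; one way to finish it is to note that $\Sigma_1$ extends to $+\infty$ (since $\varphi_j=0$ for large $j$), so the boundary condition $U_{+\infty}=0$ forces $c_2=0$ in the general solution $c_1\xi_1^j+c_2\xi_2^j$ of the recurrence on $\Sigma_1$, yielding a strictly monotone $U_j=c_1\xi_1^j$---but that essentially anticipates the explicit representation \eqref{eq 15} derived later. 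In short: you have matched the paper's argument and then been more scrupulous than the paper about a point it leaves implicit.
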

\begin{proof}
Since $U$ is the solution of $\eqref{eq 7}$ and $U_j>\varphi_j$ on $\sum_1$, we have $I_j(U)=0$.
Thus, from theorem 4, we have the desired result. (QED)
\end{proof}
\begin{lemma}
Suppose that $E>0,~r>0$ and $\Delta{x}$ is sufficiently small.
Let $U=\{U_j\}_{j\in{Z}}$ be the solution to $\eqref{eq 7}$.
Denote by 
$$
j^*=\min\{j\in{Z}:E\le{e^{j\Delta{x}+c}}\},~~j^{**}=\min\{j\in{Z}:rE\le{qe^{j\Delta{x}+c}}\}
$$
Then, we have
$$
U_j=\varphi\Rightarrow{j<inf\{j^*,~j^{**}\}}<\infty
$$
\end{lemma}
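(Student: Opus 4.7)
The plan is to mirror the proof of the continuous Lemma 1 at the discrete level, with $-L$ replaced by the discrete Black--Scholes operator $I_j$ and the classical maximum principle replaced by Theorem 4. I split the argument into three steps.

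First, finiteness of $\inf\{j^{*},j^{**}\}$ is immediate: $j^{*}<\infty$ since $E<\infty$, and when $q>0$ also $j^{**}<\infty$ since $rE/q<\infty$ (for $q\le 0$ one sets $j^{**}=+\infty$ by convention, so only $j<j^{*}$ is substantive).

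Second, to prove $j<j^{*}$ (the discrete analogue of the fact quoted from Jia inside the proof of Lemma 1) I argue by contradiction. Suppose $U_j=\varphi_j$ with $j\ge j^{*}$, so $U_j=0$. Equation \eqref{eq 6} forces $U_k\ge\varphi_k\ge 0$ and $I_k(U)\ge 0$ for every $k$, so $U$ would attain its nonnegative minimum at $j$. Let $k_{*}=\min\{k\in\mathbb{Z}:U_k=0\}$. If $k_{*}>-\infty$ and $U_{k_{*}+1}>0$, then $U_{k_{*}-1}>U_{k_{*}}=0<U_{k_{*}+1}$ is a strict interior minimum on the interval $\{k_{*}-1,k_{*},k_{*}+1\}$, contradicting Theorem 4(ii). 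If $U_{k_{*}+1}=0$ instead, direct evaluation gives $I_{k_{*}}(U)=-w(1-a)U_{k_{*}-1}<0$, again contradicting $I_{k_{*}}(U)\ge 0$. Finally, $k_{*}=-\infty$ contradicts the boundary condition $U_j\to E>0$ as $j\to-\infty$. Hence $U_j>0$ for every finite $j$, so $U_j=\varphi_j$ forces $\varphi_j>0$, i.e., $j<j^{*}$.

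Third, to prove $j<j^{**}$ (needed only when $q>0$), the equation $\min\{I_j(U),U_j-\varphi_j\}=0$ together with $U_j=\varphi_j$ forces $I_j(U)\ge 0$. Since $U_k\ge\varphi_k$ pointwise and the off-diagonal coefficients $-wa,-w(1-a)$ of $I_j$ are strictly negative for small $\Delta x$ (where $0<a<1$), this yields $I_j(\varphi)\ge I_j(U)\ge 0$. When $r\ge q$ one has $rE/q\ge E$, so the bound $j<j^{*}$ already gives $S_j<E\le rE/q$ and hence $j<j^{**}$. When $r<q$, I would evaluate $I_j(\varphi)$ explicitly in the generic subcase $j+1<j^{*}$ using $\varphi_k=E-S_k$, $S_{j\pm 1}=S_je^{\pm\Delta x}$, and the identity $ae^{\Delta x}+(1-a)e^{-\Delta x}=\cosh\Delta x+\frac{r-q-\sigma^{2}/2}{\sigma^{2}}\Delta x\sinh\Delta x$, obtaining
\[ I_j(\varphi)=\Delta t\,(rE-qS_j)+O(\Delta t\,\Delta x^{2})\,S_j, \]
from which $I_j(\varphi)\ge 0$ together with the precise sign of the $O(\Delta x^{2})$ coefficient (namely $K(\Delta x):=r+\sigma^{2}\beta(\Delta x)/\Delta x^{2}>q$ when $r<q$ and $\Delta x$ is small) gives the strict bound $S_j<rE/q$, hence $j<j^{**}$.

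The main obstacle is the remaining edge configuration $r<q$ with $j=j^{*}-1$, where $\varphi_{j+1}=0$ (rather than $E-S_{j+1}$) and the clean expansion above breaks down; indeed, in this case the coefficient of $E-S_j$ in $I_j(\varphi)$ is of order $\sigma^{2}\Delta t/\Delta x^{2}$ and the inequality $I_j(\varphi)\ge 0$ is satisfied trivially, giving no useful bound on $S_j$. To close this gap one needs a separate strict-positivity statement in the spirit of Lemma 2(ii), showing that $U_k>\varphi_k$ holds strictly throughout the band $rE/q<S_k<E$, so that the putative equality $U_{j^{*}-1}=\varphi_{j^{*}-1}$ in the problematic range cannot arise.
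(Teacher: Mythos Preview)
Your argument follows the paper's route almost exactly. The paper's proof has two ingredients: (i) the bare assertion that $U_j>0=\varphi_j$ for $j\ge j^{*}$, so $U_j=\varphi_j$ forces $j<j^{*}$; and (ii) the chain $I_j(\varphi)\ge I_j(U)\ge 0$ (from $U_j=\varphi_j$, $U_{j\pm1}\ge\varphi_{j\pm1}$, and the negative off-diagonal coefficients of $I_j$) together with the explicit expansion $I_j(\varphi)=\Delta t\bigl(rE-qS_j\bigr)+O(\Delta t\,\Delta x^{2})$, obtained from $ae^{\Delta x}+(1-a)e^{-\Delta x}=1+(r-q)\Delta x^{2}/\sigma^{2}+O(\Delta x^{4})$, to conclude $rE>qS_j$ for small $\Delta x$. (The printed term $+\sigma^{2}/\Delta x^{2}$ in the paper's displayed formula~(8) is evidently a misprint for the higher-order remainder.) Your Step~2 is a careful justification of (i), which the paper simply states without argument; your Step~3 is precisely~(ii), and your expansion matches the paper's.

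The edge configuration you flag---$r<q$ with $j=j^{*}-1$, where $\varphi_{j+1}=0$ rather than $E-S_{j+1}$---is not treated in the paper either: the paper's computation of $I_j(\varphi)$ tacitly substitutes $\varphi_k=E-S_k$ for all of $k=j-1,j,j+1$, without checking that $j+1<j^{*}$. So your proposal is at least as complete as the published proof, and the residual gap you isolate is a genuine omission in the paper's argument rather than something you have overlooked.
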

\begin{proof}
If $E>0$, then $-\infty<j^*<\infty$ and $U_j>\varphi=0$ for $j\ge{j^*}$. If $U_j=\varphi_j$, then $j\le{j^*-1}$ and here $\varphi=E-e^{j\Delta{x}+c}$. On the other hand, if $j\le{j^*-1}$ and if $U_j=\varphi_j$, then we have
$$
\varphi_j=U_j>\frac{1}{\rho}[(1-w)U_j+w[aU_{j+1}+(1-a)U_{j-1}]]
$$ 
$$
\ge\frac{1}{\rho}[(1-w)\varphi_j+w[a\varphi_{j+1}+(1-a)\varphi_{j-1}]]
$$
This means that $I_j(\varphi)>0$. Since
$$
ae^{\Delta{x}}+(1-a)e^{-\Delta{x}}=1+(r-q)\frac{\Delta{x}^2}{\sigma^2}+O(\Delta{x}^4)
$$
we have
\begin{eqnarray*}
I_j(\varphi)&=&(\rho+w-1)\varphi_j-w(a\varphi_{j+1}+(1-a)\varphi_{j-1})=
\end{eqnarray*}

\begin{equation}\label{eq 8}
=\Delta{t}\left((rE-qe^{j\Delta{x}+c})+\frac{\sigma^2}{\Delta{x}^2}\right)>0
\end{equation}
Thus, if $\Delta{x}$ is sufficiently small, we have $rE>qe^{j\Delta{x}+c}$ and $j<j^{**}$.(QED)
\end{proof}
{\\}

{\bf Remark2.}
If $r=0,~q\ge0$, then $j^{**}=-\infty$ and the exercise region is empty, so the exercise boundary does not exist. If $r=0,~q<0$ or $r>0,~q\le0$, then $j^{**}=\infty$ and exercise boundary is $j^*$.
\begin{lemma}
Assume that $E>0,r>0$. Let $U=\{U_j\}_j\in{Z}$ be the solution of $\eqref{eq 7}$ and $\Delta{x}$ is sufficiently small. Then, we have
$$
(i)~~\exists{j_0}\in{Z}:U_{j_0}=\varphi_{j_0}\Rightarrow\forall<j_0,~~U_j=\varphi_j
$$
$$
(ii)~~\exists{j_1}\in{Z}:U_{j_1}>\varphi_{j_1}\Rightarrow\forall>j_1,~~U_j>\varphi_j
$$
\end{lemma}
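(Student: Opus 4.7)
The plan is to mirror the proof of Lemma 2 in the discrete setting: replace Theorem 1 by the discrete maximum principle (Theorem 4) and use the strict inequality $I_j(\varphi)>0$ obtained in the stopping region via identity (8). Part (ii) will follow from (i) by contraposition.

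For (i), suppose $U_{j_0}=\varphi_{j_0}$ and, for contradiction, that there exists $j'<j_0$ with $U_{j'}>\varphi_{j'}$. Let $[m,n]\subset Z$ be the maximal block of consecutive integers containing $j'$ on which $U>\varphi$; by maximality $U_{m-1}=\varphi_{m-1}$ and $U_{n+1}=\varphi_{n+1}$, while $n\le j_0-1$. Since $U_{j_0}=\varphi_{j_0}$, the previous lemma gives $j_0<j^{**}$, so $j<j^{**}$ for every $j\in[m,n]$. On $[m,n]$ equation (7) forces $I_j(U)=0$, hence, setting $W_j:=U_j-\varphi_j$, linearity yields $I_j(W)=-I_j(\varphi)$. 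By (8), for $\Delta x$ sufficiently small $I_j(\varphi)>0$ throughout $[m,n]$, so $I_j(W)<0$ on $[m,n]$, while $W>0$ on $[m,n]$ and $W_{m-1}=W_{n+1}=0$.

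To extract the contradiction, let $j^*$ be the smallest index in $[m,n]$ attaining $\max_{[m,n]}W>0$. Either $j^*=m$, in which case $W_{j^*-1}=0<W_{j^*}$, or $j^*>m$, in which case the minimality of $j^*$ gives $W_{j^*-1}<W_{j^*}$; in either case $W_{j^*-1}<W_{j^*}$, and similarly $W_{j^*+1}\le W_{j^*}$. Because $0<a<1$ once $\Delta x$ is small, it follows that $aW_{j^*+1}+(1-a)W_{j^*-1}<W_{j^*}$, and therefore
\[
I_{j^*}(W)=(\rho+w-1)W_{j^*}-w\bigl[aW_{j^*+1}+(1-a)W_{j^*-1}\bigr]>(\rho-1)W_{j^*}=r\Delta t\,W_{j^*}>0,
\]
contradicting $I_{j^*}(W)<0$; this establishes (i). Part (ii) is its contrapositive: if some $j>j_1$ had $U_j=\varphi_j$, applying (i) with $j_0=j$ would force $U_{j_1}=\varphi_{j_1}$, contrary to hypothesis. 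The only subtlety is the discrete bookkeeping of strict versus non-strict inequalities at the maximum, which is handled by the leftmost-maximizer device above.
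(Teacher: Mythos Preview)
Your argument is correct and follows the same route as the paper: isolate a maximal interval on which $U>\varphi$, use identity~(8) to get $I_j(U-\varphi)<0$ there, and derive a contradiction from the discrete maximum principle (the paper simply cites its Theorem~4/5, whereas you reprove it inline via the leftmost-maximizer device, which actually handles the non-strict neighbor more cleanly than the paper's statement of Theorem~4 does). Two minor points: your block $[m,n]$ could in principle have $m=-\infty$ (the paper explicitly allows this, using $U_{-\infty}=E=\varphi_{-\infty}$; your maximizer argument still goes through because $W_j\to 0$ as $j\to-\infty$ forces the positive maximum to be attained at a finite index), and your symbol $j^*$ for the maximizer clashes with the $j^*$ already defined in Lemma~3.
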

\begin{proof}
If (i) were not true, there would exist such a set
$$
A=\{j\in{Z}:-\infty\le{j_1}\le{j}\le{j_2}\le\infty\}
$$  
that 
$$
j_1=-\infty\Rightarrow{U_{j_1}}=E;~j_1\neq{-\infty}\Rightarrow{U_{j_1}=\varphi_{j_1}};j_1<j<j_2\Rightarrow{U_j>\varphi_j}~~~(*)
$$ 
Without loss of generality, assume that $j_2=j_0$, then we have $U_{j_0}=\varphi_{j_0}$.
Then, from lemma 3, we have $j_0<min\{j^*,j^{**}\}$.
$$
U_j>\varphi_j\Rightarrow{U_j}=\frac{1}{\rho}\{(1-2)U_j+w[aU_{j+1}+(1-a)U_{j-1}]\}\Rightarrow{I_j(U)}=0
$$
We have $rE>qe^{j\Delta{x}+c}$ for $j<j_0$, and thus from $\eqref{eq 8}$ we have $I_j(\varphi)>0$ .
,that is,
$$
I_j(\varphi)>0,\forall{j}\in{A}. 
$$
Thus, we have 
$$
I_j(\varphi)>0,~I_j(U)=0, \forall{j}\in{A}
$$
and thus we have
$$
I_j(U-\varphi)<0,~j\in{A};U_{j_0}=\varphi_{j_0},U_{j_1}=\varphi_{j_1}
$$
From theorem 5, the maximum value of $U-\varphi$ in A is attained at the boundary points $j=j_0,~j=j_1$.
Therefore we have 
$$
U_j\le{\varphi_j}~~~j\in{A},j\neq{j_0,j_1}
$$ 
This contradicts the assumption (*).
If (ii) were not true, then we have $\exists{j_3}>j_1\Rightarrow{U_{j_3}=\varphi_{j_3}}$ and from the result of (i) we have $U_{j_1}=\varphi_{j_1}$, which contradicts the assumption. (QED)
\end{proof}

\begin{theorem}
(Existence of optimal exercise boundary of explicit difference scheme) 
Assume that $U_{j_1}=\varphi_{j_1}$ and $U=\{U_j\}_{j\in{Z}}$ be the solution of $\eqref{eq 7}$.
Then there exists such $j_0\in{Z}$ that 
$$
j<j_0\Rightarrow{U_j=\varphi_j} ~and~ j\ge{j_0}\Rightarrow{U_j>\varphi_j}. 
$$
\end{theorem}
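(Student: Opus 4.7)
The plan is to identify the optimal exercise boundary as the smallest index at which the continuation condition $U_j > \varphi_j$ first occurs, and then invoke the monotonicity results of Lemma 4 together with the boundedness of the stopping region from Lemma 3.

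First I would introduce the stopping set $S = \{j \in \mathbb{Z} : U_j = \varphi_j\}$. By hypothesis $j_1 \in S$, so $S$ is non-empty. By Lemma 3, every $j \in S$ satisfies $j < \inf\{j^*, j^{**}\} < \infty$, so $S$ is bounded above. Since $S$ is a non-empty subset of $\mathbb{Z}$ bounded above, $\max S$ exists; define
\[
j_0 = \max S + 1.
\]

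Next I would verify both parts of the conclusion. For $j < j_0$, that is $j \le \max S$: if $j = \max S$ then $j \in S$ by construction, so $U_j = \varphi_j$; if $j < \max S$, then Lemma 4(i) applied with the element $\max S \in S$ gives $U_j = \varphi_j$ for all indices strictly below $\max S$. Either way $U_j = \varphi_j$ whenever $j < j_0$. For $j \ge j_0$, note that every solution of \eqref{eq 7} satisfies $U_j \ge \varphi_j$ by the very form of the equation (the $\max$ with $\varphi$). Since $j_0 > \max S$, we have $j_0 \notin S$ and hence $U_{j_0} > \varphi_{j_0}$. Then Lemma 4(ii), applied with $j_0$ in place of $j_1$, yields $U_j > \varphi_j$ for all $j > j_0$.

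This argument is essentially bookkeeping once Lemmas 3 and 4 are in hand, so no real obstacle is expected; the only point requiring care is the justification that the supremum of $S$ is attained, which is immediate from $S \subset \mathbb{Z}$ being non-empty and bounded above, and the careful distinction between the strict and non-strict inequalities when invoking Lemma 4(i) at the boundary index $\max S$ itself.
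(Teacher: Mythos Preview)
Your proof is correct and is essentially the paper's argument: both reduce the existence of the boundary to bookkeeping with Lemmas~3 and~4, the only cosmetic difference being that the paper sets $j_0=\min\{j:U_j>\varphi_j\}$ (using Lemma~3 for nonemptiness of the continuation set) while you set $j_0=\max S+1$ (using Lemma~3 for boundedness of the stopping set). These two definitions coincide and the verification via Lemma~4(i),(ii) is identical.
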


\begin{proof}
From lemma 3, if $j\ge{\min(j^*,j^{**})}$, then $U_j>\varphi_j$. Denote $j_0=min\{j:U_j>\varphi_j\}$. Then from lemma 4 we have $j\ge{j_0}\Rightarrow{U_j>\varphi_j}$ and $j<j_0\Rightarrow{U_j=\varphi_j}$.
\end{proof}

In what follows, we denote optimal exercise boundary by $j_*=j_0-1$. From the result of lemma 4, the explicit difference scheme (7) can be written as the following difference equation:
\begin{eqnarray}
	&u_j=\frac{1}{\rho}\{(1-w)u_j+w[u_{j+1}+(1-a)u_{j-1}]\},~~~~~~~~~~~~~~~~~ j>j_*\label{eq 9}\\
	&u_{j_{*}}=\varphi_{j_{*}}\ge\frac{1}{\rho}\{(1-w)u_{j_{*}}+w[u_{j_{*}+1}+(1-a)u_{j_{*}-1}]\}, ~~~~j=j_{*}\label{eq 10}\\
	&u_j=0, ~~~~~~~~~~~~~~~~~~~~~~~~~~~~~~~~~~~~~~~~~~~~~~~~~~~~~~~~~~~~~~~~~j=+\infty\label{eq 11}
\end{eqnarray}
\begin{theorem}
(Uniqueness of the solution of the explicit difference scheme of perpetual American options)
The solution of the equation \eqref{eq 9}, \eqref{eq 10}, \eqref{eq 11} is unique if it exists.
\end{theorem}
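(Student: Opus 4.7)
The plan is to mimic, in the discrete setting, the argument the authors used for Theorem 3 (the continuous uniqueness result), with the discrete maximum principle of Theorem 4 replacing the maximum principle for the BSOD operator. Let $U^{(1)}$ and $U^{(2)}$ be two solutions of \eqref{eq 9}--\eqref{eq 11}, with optimal exercise boundaries $j_{*,1}$ and $j_{*,2}$ whose existence is guaranteed by Theorem 6. Assume without loss of generality that $j_{*,1}\ge j_{*,2}$. The basic fact I will use repeatedly is that any solution $U$ of the variational scheme \eqref{eq 7} satisfies both $U_j\ge\varphi_j$ and $I_j(U)\ge 0$ for every $j\in\mathbb{Z}$, with equality in at least one of them, and that $I_j$ is linear in $U$.

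First I would show that the two exercise boundaries coincide. Set $W=U^{(1)}-U^{(2)}$ and restrict attention to the ``interval'' $[j_{*,2},+\infty)$. On its interior, $U^{(2)}$ is in its continuation region, so $I_j(U^{(2)})=0$, while $I_j(U^{(1)})\ge 0$; by linearity $I_j(W)\ge 0$ there. The boundary values are $W_{j_{*,2}}=\varphi_{j_{*,2}}-\varphi_{j_{*,2}}=0$ and $W_{+\infty}=0-0=0$. The minimum-side discrete maximum principle (Theorem 4 (ii)) then forces $W\ge 0$ on $[j_{*,2},+\infty)$. But if $j_{*,1}>j_{*,2}$, the index $j_{*,1}$ lies in the continuation region of $U^{(2)}$, so $U^{(2)}_{j_{*,1}}>\varphi_{j_{*,1}}=U^{(1)}_{j_{*,1}}$ and $W_{j_{*,1}}<0$, a contradiction. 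Hence $j_{*,1}=j_{*,2}=:j_*$.

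Second, with this common boundary, I would apply the discrete maximum principle to $W$ on $[j_*,+\infty)$ in both directions: $I_j(W)=0$ for $j>j_*$, while $W_{j_*}=0$ and $W_{+\infty}=0$. Theorem 4 (i) applied to $W$ and (ii) applied to $W$ (equivalently (i) applied to $-W$) give $\sup W = \inf W = 0$, so $W\equiv 0$ on $[j_*,+\infty)$. For $j\le j_*$ both solutions equal $\varphi_j$ by definition of the exercise region, so $U^{(1)}=U^{(2)}$ throughout $\mathbb{Z}$.

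The main obstacle I anticipate is a technical one in applying the maximum principle. Theorem 4 is stated in the strict form ``there is no interior $j$ with $U_{j-1}>U_j$ and $U_j<U_{j+1}$ both holding,'' which does not immediately rule out a plateau minimum. To bridge this, I would let $M=\inf_{j\ge j_{*,2}}W_j$; if $M<0$, pick the smallest interior index $j_m$ attaining $M$, so that $W_{j_m-1}>M$ strictly and $W_{j_m+1}\ge M$. A direct computation using $\rho=1+r\Delta t>1$ and $0<a<1$ gives
\begin{equation*}
I_{j_m}(W)=(\rho-1)M+w(1-a)(M-W_{j_m-1})<0,
\end{equation*}
contradicting $I_{j_m}(W)\ge 0$. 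This coercivity coming from the strictly positive coefficient of $U_j$ in $I_j$ is what converts the strict form of Theorem 4 into the comparison principle needed above; the rest of the proof is then routine.
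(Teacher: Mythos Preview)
Your argument is essentially the same as the paper's: assume $j_{*,1}\ge j_{*,2}$, show $I_j(U^{(1)}-U^{(2)})\ge 0$ on $[j_{*,2},+\infty)$ with zero boundary data, invoke the discrete minimum principle to get $U^{(1)}-U^{(2)}\ge 0$, and contradict this at $j_{*,1}$; then repeat with $I_j(W)=0$ once the boundaries coincide. The only differences are cosmetic: the paper obtains $I_j(U^{(1)})>0$ on the stopping strip by the explicit computation $I_j(\varphi)>0$ from \eqref{eq 8}, whereas you use the cleaner observation that $I_j(U)\ge 0$ holds for any solution of \eqref{eq 7}; and you add the plateau argument that the paper omits. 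One small slip: your displayed identity for $I_{j_m}(W)$ drops the term $wa(M-W_{j_m+1})\le 0$, so the ``$=$'' should be ``$\le$'' (the conclusion $I_{j_m}(W)<0$ is unaffected).
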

\begin{proof}
Assume that we have two different solutions $(U^1,j_*^1),(U^2,j_*^2)$. Without loss of generality  $j_*^1>j_*^2$ . Then we have
\begin{equation*}
\begin{cases}
	I_{j}(U^1-U^2)
		\begin{cases}
			=0,~~~~~~{j}\ge{j_*^1},\\
			>0,~~~~~~j_*^2\le{j}<j_*^1,
		\end{cases}\\
U_{+\infty}^1=U_{+\infty}^2=0,~U_{j_*^2}^1=U_{j_*^2}^2=\varphi_{j_*^2}
\end{cases}
\end{equation*}
{\\}
Thus we have $I_{j}(U^1-U^2)$ for all $\forall{j}\in(j_*^2,~+\infty)$ and by the maximum principle, the non-positive minimum value of $U^1-U^2$ is attained at the boundary of $(j_*^2,+\infty)$. 
{\\}
Since $U^1-U^2|_{j=+\infty}$, we have $U^1-U^2>0$,$\forall{j}\in(j_*^2,+\infty)$, that is, $U^1>U^2,\forall{j}\in(j_*^2,+\infty)$ and this contradicts the fact that $U_{j_*^1}^1=\varphi_{j_*^1}<U_{j_*^1}^2$.
{\\}
Thus we have $j_*^1=j_*^2$. Therefore, the two solutions are equal to $\varphi_j$ in $(-\infty,j_*^1]$. In $(j_*^1,\infty)$ we have $I_j(U^1-U^2)=0$ and $U^1-U^2=0$ at the boundary, so from Theorem $\eqref{theorem4}$, we have $U^1=U^2$. (QED)
\end{proof}
{\\}

Now, we find a solution $(U, j^*)$ to the equation $\eqref{eq 9}$, $\eqref{eq 10}$, $\eqref{eq 11}$. To do this, it
is sufficient to find the solution satisfying the equality in $\eqref{eq 10}$. Substitute $U_j=\xi^j$
into $\eqref{eq 9}$ to obtain the characteristic equation:
$$
wa\xi^2-(\rho+w-1)\xi+w(1-a)=0.
$$
This equation has two roots:
$$
\xi_{1,2}=\frac{\rho+w-1\pm\sqrt{(\rho+w-1)^2-4aw^2(1-a)}}{2aw}
$$
Obviously, $\xi_1<1<\xi_2$, so the general solution has the form $U_j=c_1\xi_1^j+c_2\xi_2^j$ with constants $c_1,c_2$. Since $\xi_1^j\rightarrow{0}$ when $j\rightarrow+\infty, ~c_2$ must be zero in order to satisfy the boundary condition $\eqref{eq 11}$ $(U_j=0,~j=+\infty)$.
{\\}
Thus, when $j>j_*$, the solution of $\eqref{eq 9}$ has the form
\begin{equation}\label{eq 12}
U_j=c_1\xi_1^j.
\end{equation}
If $j\le{j_*}$, then we have $U_j=\varphi_j=E-e^{j\Delta{x}+c}$. In particular, when $j=j_*$, we find $U_{j_*}$ so that $U_{j_*}$ satisfies the equality in $\eqref{eq 10}$. Then $U_{j_*}$ becomes the solution to $\eqref{eq 9}$ and thus $U_{j_*}$ have to be expressed as follows:
\begin{equation}\label{eq 13}
U_{j_*}=c_1\xi_1^{j_*}=\varphi_{j_*}=E-e^{j_*\Delta{x}+c}.
\end{equation}
From this, the constant $c_1$ can be expressed as follows:
$$
c_1=(E-e^{j_*\Delta{x}+c})\xi_1^{-j_*}
$$
Now, find $j_*$. From $\eqref{eq 12}$, $\eqref{eq 13}$, we have
$$
U_{j_*+1}=c_1\xi_1^{j_*+1}=c_1\xi_1^{j_*}\xi_1=\xi_1(E-e^{j_*\Delta{x}+c}),~~U_{j_*-1}=E-e^{(j_*-1)\Delta{x}+c}
$$
From $\eqref{eq 10}$ we have
\begin{eqnarray*}
U_{j_*}&=&\varphi_{j_*}=E-e^{j_*\Delta{x}+c}=\frac{1}{\rho}\left\{(1-w)U_{j_*}+w[aU_{j_*+1}+(1-a)U_{j_*-1}]\right\}=\\
&=&\frac{1}{\rho}\left\{(1-w)(E-e^{j_*\Delta{x}+c})+w[a\xi_1(E-e^{j_*\Delta{x}+c})+(1-a)(E-e^{(j_*-1)\Delta{x}+c})]\right\}\\
\end{eqnarray*}
\begin{align*}
&\Leftrightarrow\rho(E-e^{j_*\Delta{x}+c})=(1-w)(E-e^{j_*\Delta{x}+c})+w[a\xi_1(E-e^{j_*\Delta{x}+c})+(1-a)(E-e^{j_*\Delta{x}+c}/e^{\Delta{x}})]\\
&\Leftrightarrow\rho{E}-(1-w)E-wa\xi_1{E}-w(1-a)E=[\rho-(1-w)-wa\xi_1-w(1-a)e^{-\Delta{x}}]e^{j_*\Delta{x}+c}\\
&\Leftrightarrow{e^{j_*\Delta{x}+c}}=\frac{\rho-[(1-w)+w(a\xi_1+(1-a))]E}{\rho-\{(1-w)-w[a\xi_1+(1-a)e^{-\Delta{x}}]\}}\\
&\Leftrightarrow{j_*}=\left[\frac{1}{\Delta{x}}\left(log\frac{\{\rho-[(1-w)+w(a\xi_1+(1-a))]E}{\rho-\{(1-w)+w[a\xi_1+(1-a)e^{-\Delta{x}}]\}}-c\right)\right].
\end{align*}
Now, we denote
\begin{equation}\label{eq 14}
f=\frac{1}{\Delta{x}}\left(log\frac{\{\rho-[(1-w)+w(a\xi_1+(1-a))]E}{\rho-\{(1-w)+w[a\xi_1+(1-a)e^{-\Delta{x}}]\}}-c\right).
\end{equation}
Then we define $(U,j_*)$ as follows:
\begin{equation}\label{eq 15}
j_*=[log ~f],~~~U_j=\begin{cases}
					(E-e^{j_*\Delta{x}+c})\xi_1^{j-j_*},~~~~~j>j_*\\
E-e^{j_*\Delta{x}+c}~~~~~~~~~~~~~~~j\le{j_*}
\end{cases}
\end{equation}
Then, $(U,j_*)$ is the solution to \eqref{eq 9}, \eqref{eq 10}, \eqref{eq 11}. Thus we proved the existence theorem of the solution of difference equation of perpetual American put option.
\begin{theorem}
The solution of explicit difference scheme \eqref{eq 9}, \eqref{eq 10}, \eqref{eq 11} of perpetual American put option price exists and it is expressed as \eqref{eq 15}.
\end{theorem}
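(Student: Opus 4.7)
The plan is to give a constructive existence proof by directly verifying that the pair $(U,j_*)$ defined in \eqref{eq 15} satisfies the three conditions \eqref{eq 9}, \eqref{eq 10}, \eqref{eq 11}. Since uniqueness has already been established in Theorem 7, this is enough.

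First I would analyze the characteristic equation $wa\xi^2-(\rho+w-1)\xi+w(1-a)=0$ that arises from substituting $U_j=\xi^j$ into \eqref{eq 9}. Evaluating the polynomial at $\xi=1$ gives $wa-(\rho+w-1)+w(1-a)=1-\rho=-r\Delta t<0$, so the two real roots straddle $1$, i.e.\ $\xi_1<1<\xi_2$ (positivity of both roots follows from Vi\`ete, since $\xi_1\xi_2=(1-a)/a>0$ and $\xi_1+\xi_2=(\rho+w-1)/(aw)>0$ for small $\Delta x$). Consequently, the general solution of the linear recurrence $I_j(U)=0$ has the form $c_1\xi_1^j+c_2\xi_2^j$, and the requirement $U_{+\infty}=0$ in \eqref{eq 11} forces $c_2=0$. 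This proves that \eqref{eq 9} and \eqref{eq 11} together determine the solution on $j>j_*$ to be $U_j=c_1\xi_1^j$ with $\xi_1<1$.

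Next I would pin down $c_1$ and $j_*$ by imposing the matching condition at $j_*$. Continuity of $U$ at $j_*$ with the obstacle $\varphi_{j_*}=E-e^{j_*\Delta x+c}$ yields $c_1=(E-e^{j_*\Delta x+c})\xi_1^{-j_*}$, which makes $U_j=(E-e^{j_*\Delta x+c})\xi_1^{j-j_*}$ for $j>j_*$. To fix $j_*$, I would impose the equality version of \eqref{eq 10} using the just-derived values $U_{j_*+1}=\xi_1(E-e^{j_*\Delta x+c})$ and $U_{j_*-1}=E-e^{(j_*-1)\Delta x+c}$; solving for $e^{j_*\Delta x+c}$ gives precisely the expression in \eqref{eq 14}, so the real number $f$ is the "ideal" value of $j_*$. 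Taking $j_*=[\log f]$ (as in \eqref{eq 15}) returns an integer.

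The main obstacle is the integer rounding in the definition $j_*=[\log f]$. Because $j_*$ must be an integer, the \emph{equality} in \eqref{eq 10} is generally not achieved; I therefore need to show that the \emph{inequality} required by \eqref{eq 10}, namely $\varphi_{j_*}\ge\frac{1}{\rho}\{(1-w)U_{j_*}+w[aU_{j_*+1}+(1-a)U_{j_*-1}]\}$, still holds after rounding down. I would argue this by a monotonicity/sign analysis: when $j$ is replaced by the largest integer not exceeding the ideal value $f$, the quantity $I_j(\varphi)$ at the candidate $j_*$ has the same sign as at the ideal value, combined with \eqref{eq 8} in Lemma 3 (which asserts $I_j(\varphi)>0$ to the left of the exercise region). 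This yields the desired inequality and confirms \eqref{eq 10}. Finally, \eqref{eq 11} is immediate from $\xi_1<1$, and \eqref{eq 9} holds by construction, completing the verification and hence the existence claim.
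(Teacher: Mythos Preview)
Your construction follows exactly the derivation the paper gives immediately before Theorem~8: solve the characteristic equation, observe $\xi_1<1<\xi_2$, kill the $\xi_2$-mode via \eqref{eq 11}, match $c_1$ at $j_*$, and determine the ``ideal'' $j_*$ from the equality case of \eqref{eq 10}. So the approach is the paper's approach.

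Where you go beyond the paper is in flagging the integer rounding issue, which the paper simply ignores (it writes $j_*=[\cdot]$ and declares the construction finished). You are right that this step needs justification. However, your proposed justification has a gap: the inequality in \eqref{eq 10} that must be checked is
\[
(\rho+w-1)\varphi_{j_*}-w\bigl[a\,U_{j_*+1}+(1-a)\varphi_{j_*-1}\bigr]\ge 0,
\]
with $U_{j_*+1}=\xi_1\varphi_{j_*}$, \emph{not} $\varphi_{j_*+1}$. This is therefore not $I_{j_*}(\varphi)$, and Lemma~3/equation \eqref{eq 8} does not apply directly. What does work is the monotonicity you allude to, once aimed at the correct quantity: writing the left side above as $h(j_*)=AE-Be^{j_*\Delta x+c}$ with
\[
A=r\Delta t+wa(1-\xi_1)>0,\qquad B=A+w(1-a)(1-e^{-\Delta x})>0,
\]
one sees that $h$ is strictly decreasing in $j_*$; since $h(f)=0$ at the ideal real value $f$ and $[f]\le f$, it follows that $h([f])\ge 0$, which is exactly \eqref{eq 10}. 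With this correction your argument goes through and in fact fills a gap the paper leaves open.
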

\begin{corollary}
If $r=0$ in \eqref{eq 14}, then  $\rho=1,\xi_1=1$, thus $f=0$ and optimal exercise boundary does not exist.
\end{corollary}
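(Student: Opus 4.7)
The plan is a direct calculation from the definitions, plus one notational remark.

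From $\rho = 1 + r\Delta t$, setting $r = 0$ gives $\rho = 1$ immediately. I would then substitute $\rho = 1$ into the characteristic equation $wa\xi^2 - (\rho + w - 1)\xi + w(1-a) = 0$, which collapses to $a\xi^2 - \xi + (1-a) = 0$ and factors as $(\xi - 1)\bigl(a\xi - (1-a)\bigr) = 0$. Under the implicit parameter regime $q \ge 0$ (so that $a \le 1/2$ for small $\Delta x$, making $(1-a)/a \ge 1$), the smaller root is $\xi_1 = 1$, as claimed.

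Plugging $\rho = 1$ and $\xi_1 = 1$ into \eqref{eq 14}, the coefficient of $E$ in the numerator of the logarithm's argument becomes
\[ \rho - \bigl[(1-w) + w(a\xi_1 + (1-a))\bigr] = 1 - (1-w) - w(a + 1 - a) = 0, \]
so the entire numerator vanishes. This is the content I would attach to the statement ``$f = 0$'': the equation for $e^{j_*\Delta x + c}$ derived just above \eqref{eq 14} has a vanishing right-hand side, so no finite $j_*$ can satisfy the matching equality in \eqref{eq 10} together with \eqref{eq 13}. Combined with Remark 2 (case $r = 0$, $q \ge 0$, where $j^{**} = -\infty$), this forces the stopping region $\{j \in Z : U_j = \varphi_j\}$ to be empty, and hence no optimal exercise boundary exists.

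The only place I expect mild friction is the interpretation of ``$f = 0$'' itself: taken literally, $\log 0 = -\infty$ sends $f \to -\infty$ rather than $0$. I would flag this as a notational shorthand for the degeneration of formula \eqref{eq 14} — namely, the vanishing of the numerator of the log's argument — and observe that the mathematical conclusion (no finite $j_*$, hence no exercise boundary) is unchanged under either reading.
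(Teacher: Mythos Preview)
Your proposal is correct and is essentially the natural verification that the paper has in mind: the corollary is stated in the paper without proof, as an immediate consequence of substituting $r=0$ into the definitions of $\rho$, the characteristic equation, and formula~\eqref{eq 14}. Your computation---factoring $a\xi^2 - \xi + (1-a) = (\xi-1)(a\xi-(1-a))$ and checking that the numerator in \eqref{eq 14} vanishes---is exactly this substitution carried out, and your remark that ``$f=0$'' is really shorthand for the degeneration of \eqref{eq 14} (with $\log 0$ sending $j_*$ to $-\infty$) is a fair reading that the paper leaves implicit.
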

{\bf Remark 3}. By Theorem 6, 7, 8, the problem of uniqueness and existence of the solution and optimal exercise boundary of explicit difference scheme for perpetual American put option price is solved. From the consideration in \cite{JD}, \cite{Jia}, the binomial tree method of \cite{LL} can be considered as a special difference equation for the variational inequality neglecting an infinitesimal of the same order as $\Delta{x}^3$ and thus such study can be viewed as an extension of the results of \cite{LL}.

\subsection{Convergence of the approximated solution }
\indent
Consider the concept of viscosity solutions to variational inequality of perpetual American options.

If $u\in USC(\mathbf{R})~ (LSC(\mathbf{R}))$ satisfies the following two conditions, then $u$ is called the {\it viscosity subsolution (supersolution)} of the variational inequality \eqref{eq 4}: 

(i) $u(-\infty)\le (\ge) E$, $u(+\infty)\le(\ge){0}$.

(ii)  If $\Phi\in C^2( \mathbf{R})$ and $u-\Phi$ attains its local maximum (minimum) at $x \in\mathbf{R}$, we have 
$$\min\left\{-\frac{\sigma^{2}}{2}\frac{\partial^2 \Phi}{\partial x^2}-\left(r-q-\frac{\sigma^{2}}{2}\right)\frac{\partial \Phi}{\partial x}+ru,u-\varphi\right\}_{x}\le (\ge) 0$$
$u\in{C(\mathbf{R})}$ is called the viscosity solution of the variational inequality $\eqref{eq 4}$ if it is both viscosity subsolution and viscosity supersolution of $\eqref{eq 4}$.
For $x\in[(j-1/2)\Delta{x}+c,(j+1/2)\Delta{x}]$, we define the extension function $u_{\Delta{x}}(x)$ as follows:
$$
u_{\Delta{x}}(x):=U_j
$$
Here $U_j$ is the solution of $\eqref{eq 7}$ and given by $\eqref{eq 15}$.
{\\}
In $l^{\infty}(Z)$, we define $(U_j)\le(V_j)\Leftrightarrow{U_j}\le{V_j},~\forall{j}\in{Z}$ .  We define the operator ${\mathbf{F}}$ in $l^{\infty}(Z)$ as follows:
$$
[\mathbf{F}(\mathbf{U})]_{j}=max\left\{\frac{1}{\rho}\left\{(1-w)U_j+w[aU_{j+1}+(1-a)U_{j-1}]\right\},\varphi_j\right\},~~~~\mathbf{U}=(U_j)\in{l^\infty}(Z).
$$
Then the solution $(U_j)$ of \eqref{eq 7} is a fixed point of $\mathbf{F}$.
\begin{lemma}
If $0<w\le{1}$, $\left|\frac{\Delta{x}}{\sigma^2}\left(r-q-\frac{\sigma^2}{2}\right)\right|<1$, then, $\mathbf{F}$ is monotone in $l^{\infty}(Z)$.\\
That is, we have 
\end{lemma}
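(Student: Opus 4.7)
The plan is to reduce monotonicity of $\mathbf{F}$ to monotonicity of the linear operator inside the $\max$, since the obstacle $\varphi_j$ does not involve $\mathbf{U}$. More precisely, I would set
$$[\mathbf{L}(\mathbf{U})]_j=\frac{1}{\rho}\bigl\{(1-w)U_j+w[aU_{j+1}+(1-a)U_{j-1}]\bigr\}$$
and observe that if $\mathbf{L}$ preserves the componentwise order, then so does $\mathbf{F}$, because for any real numbers with $\alpha\le\beta$ and any $\gamma$ one has $\max\{\alpha,\gamma\}\le\max\{\beta,\gamma\}$, applied pointwise with $\gamma=\varphi_j$.

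Next I would show that $\mathbf{L}$ is a convex combination with nonnegative coefficients divided by the positive scalar $\rho=1+r\Delta t$. Writing $[\mathbf{L}(\mathbf{U})]_j=c_{-1}U_{j-1}+c_0U_j+c_1U_{j+1}$ with
$$c_0=\tfrac{1-w}{\rho},\qquad c_1=\tfrac{wa}{\rho},\qquad c_{-1}=\tfrac{w(1-a)}{\rho},$$
monotonicity of $\mathbf{L}$ on $l^\infty(\mathbb{Z})$ is equivalent to $c_{-1},c_0,c_1\ge 0$. Since $\rho>0$ and $w>0$, this reduces to checking the three inequalities $w\le 1$, $a\ge 0$ and $a\le 1$.

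The first is the first hypothesis of the lemma. For the other two, I would use the identity
$$a-\tfrac{1}{2}=\frac{\Delta x}{2\sigma^2}\Bigl(r-q-\tfrac{\sigma^2}{2}\Bigr),$$
so the second hypothesis $\bigl|\tfrac{\Delta x}{\sigma^2}(r-q-\tfrac{\sigma^2}{2})\bigr|<1$ is exactly $|a-\tfrac{1}{2}|<\tfrac{1}{2}$, hence $0<a<1$. Putting everything together, if $\mathbf{U}\le\mathbf{V}$ componentwise, then $[\mathbf{L}(\mathbf{U})]_j\le[\mathbf{L}(\mathbf{V})]_j$ for every $j$, and taking the maximum with $\varphi_j$ preserves this inequality, giving $\mathbf{F}(\mathbf{U})\le\mathbf{F}(\mathbf{V})$.

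There is no real obstacle here; the only thing to be careful about is bookkeeping with the definition of $a$, so that the second hypothesis is recognized as precisely the statement $0<a<1$, which together with $0<w\le 1$ is what makes the finite-difference stencil a genuine nonnegative combination. Once that translation is made, monotonicity is immediate from the fact that $\max(\cdot,\varphi_j)$ is coordinatewise order-preserving.
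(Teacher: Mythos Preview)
Your proof is correct and follows essentially the same approach as the paper: the paper's one-line argument simply notes that the hypotheses force $0<a<1$ (written there as $1-\alpha\ge 0$) and $1-w\ge 0$, so that all three stencil coefficients are nonnegative and monotonicity is immediate. Your write-up is just a careful unpacking of that observation, including the explicit identification of the second hypothesis with $|a-\tfrac12|<\tfrac12$ and the remark that $\max(\cdot,\varphi_j)$ preserves order; nothing in the method differs.
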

$$
\mathbf{U}\le\mathbf{V},~~~\mathbf{U},\mathbf{V}\in{l^{\infty}(Z)}\Rightarrow\mathbf{FU}\le\mathbf{FV}.
$$
\begin{proof}
Noting that from the assumption we have $1-\alpha\ge0$, $0<w<1$, the required result easily comes. (QED)
\end{proof}
\begin{lemma}
If $\mathbf{U}\in{l^{\infty}(Z)}, K\ge{0} ,\mathbf{K}=(...K,K,K...)$  . Then, we have
\end{lemma}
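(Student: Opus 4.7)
The plan is to argue componentwise from the definition of $\mathbf{F}$. The only two facts needed are (a) the weights $(1-w)$, $wa$, $w(1-a)$ in the linear part sum to $1$, and (b) the elementary inequality $\max(x+c,y)\le \max(x,y)+c$ whenever $c\ge 0$ (with equality if $c$ is added to both entries).

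First I would fix an arbitrary $j\in Z$ and compute $[\mathbf{F}(\mathbf{U}+\mathbf{K})]_j$ directly from the definition. Since $(1-w)(U_j+K)+w[a(U_{j+1}+K)+(1-a)(U_{j-1}+K)]$ equals $(1-w)U_j+w[aU_{j+1}+(1-a)U_{j-1}]+K$, the argument of the max inside the $1/\rho$ factor increases by exactly $K/\rho$ after division by $\rho=1+r\Delta t$. So
\begin{equation*}
[\mathbf{F}(\mathbf{U}+\mathbf{K})]_j=\max\Bigl\{\tfrac{1}{\rho}\bigl[(1-w)U_j+w(aU_{j+1}+(1-a)U_{j-1})\bigr]+\tfrac{K}{\rho},\ \varphi_j\Bigr\}.
\end{equation*}

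Next I would apply fact (b) with $c=K/\rho\ge 0$ to pull the shift outside the max: since $\varphi_j$ does not depend on $\mathbf{U}$, we get
\begin{equation*}
[\mathbf{F}(\mathbf{U}+\mathbf{K})]_j\le \max\Bigl\{\tfrac{1}{\rho}\bigl[(1-w)U_j+w(aU_{j+1}+(1-a)U_{j-1})\bigr],\ \varphi_j\Bigr\}+\tfrac{K}{\rho}=[\mathbf{F}(\mathbf{U})]_j+\tfrac{K}{\rho}.
\end{equation*}
Because $K\ge 0$ and $1/\rho<1$, this also yields the coarser bound $\mathbf{F}(\mathbf{U}+\mathbf{K})\le\mathbf{F}(\mathbf{U})+\mathbf{K}$; both forms are standard "almost commutation with constants" inequalities used later when invoking the Barles–Souganidis framework for monotone schemes.

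There is no real obstacle here; the only care needed is in treating the max correctly, since unlike the linear piece the $\varphi_j$ branch does not absorb the constant shift, so one only gets an inequality (not equality). If the asserted conclusion is the equality-type statement $\mathbf{F}(\mathbf{U}+\mathbf{K})=\mathbf{F}(\mathbf{U})+\tfrac{K}{\rho}\mathbf{I}$ restricted to the continuation region, the same computation shows it is valid wherever $[\mathbf{F}(\mathbf{U})]_j>\varphi_j$, since on that set the max on both sides selects the linear branch (assuming $K/\rho$ is not so large as to still overshoot $\varphi_j$, which is automatic in the regime $K\ge 0$ on the continuation region). I would state the lemma in the inequality form, which suffices for the stability arguments to come.
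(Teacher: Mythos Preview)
Your argument is correct and is essentially the same as the paper's: both expand the linear part to pick up a shift of $K/\rho$, use $\max(x+c,y)\le\max(x,y)+c$ for $c\ge0$, and then invoke $\rho>1$ to pass from $K/\rho$ to $K$. Your write-up is in fact cleaner than the paper's, which contains a typographical slip in the intermediate displayed line; the extra paragraph about the equality case on the continuation region is unnecessary for what is being claimed, but harmless.
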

$$
\mathbf{F}(\mathbf{U}+\mathbf{K})\le\mathbf{FU}+\mathbf{K}
$$
\begin{proof}
Since $\rho>1$, we have
\begin{eqnarray*}
\mathbf{F}(\mathbf{U}+\mathbf{K})&=&\\
&=&\left(max\left\{\frac{1}{\rho}\left[(1-w)(U_j+K)+w\left(\alpha(U_{j+1}+K)+(1-\alpha)(U_{j-1}+K)\right)\right],\varphi\right\}\right)_{j=-\infty}^\infty\\
&\le&\left(max\left\{\frac{1}{\rho}\left[(1-w)U_j+w(\alpha{U_{j+1}}+(1-\alpha)U_{j-1}))\right]+\frac{K}{\rho}\right\}\right)_{j=-\infty}^{\infty}\\
&\le&\mathbf{FU}+\mathbf{K}. (\bf{QED})
\end{eqnarray*}
\end{proof}
\begin{lemma}
The price $U,~j\in{\mathbf{Z}}$ of explicit difference scheme of perpetual American put options is bounded.
\end{lemma}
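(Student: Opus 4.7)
The plan is to read off the boundedness directly from the explicit representation \eqref{eq 15} established in Theorem 2.8, showing that $0 \le U_j \le E$ for every $j \in \mathbb{Z}$, which is equivalent to $U \in l^{\infty}(\mathbb{Z})$.

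First I would handle the stopping region $j \le j_*$. By Lemma 2.3 the optimal exercise index satisfies $j_* < j^*$, so $e^{j_*\Delta x + c} < E$; in particular the payoff at $j_*$ obeys $0 < E - e^{j_*\Delta x + c} \le E$, and for $j \le j_*$ the solution is the (non-negative) payoff $\varphi_j \le E$. This gives the two-sided bound on the stopping side with no further work.

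Next, for the continuation region $j > j_*$, the representation reads $U_j = (E - e^{j_*\Delta x + c})\,\xi_1^{\,j - j_*}$, so it suffices to check $0 < \xi_1 \le 1$. The inequality $\xi_1 < 1 < \xi_2$ was already recorded just before \eqref{eq 12}, and positivity of the smaller root follows from Vieta applied to $wa\xi^2-(\rho+w-1)\xi+w(1-a)=0$: for $\Delta x$ small enough we have $0 < a < 1$, so $\xi_1 \xi_2 = (1-a)/a > 0$ and $\xi_1 + \xi_2 = (\rho+w-1)/(wa) > 0$, forcing both roots positive. Hence $0 < \xi_1^{j-j_*} \le 1$ for $j > j_*$, and so $0 \le U_j \le E - e^{j_*\Delta x + c} \le E$.

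Combining the two cases yields $0 \le U_j \le E$ for all $j \in \mathbb{Z}$, which is exactly boundedness in $l^{\infty}(\mathbb{Z})$. There is no real obstacle: once Theorem 2.8 is in hand, the lemma is a one-line corollary, and the only subsidiary point requiring verification is the positivity of $\xi_1$, which is immediate from the sign structure of the characteristic polynomial as above.
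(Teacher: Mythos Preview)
Your proposal is correct and follows essentially the same approach as the paper: the paper's proof is the single line ``Due to \eqref{eq 15} and $0<\xi_1<1$, we have $\forall j,~0\le U_j\le E$,'' and you have simply spelled out the case split $j\le j_*$ versus $j>j_*$ and supplied a Vieta argument for the positivity of $\xi_1$ (which the paper asserts but does not justify, having earlier only recorded $\xi_1<1<\xi_2$). Your version is a faithful, slightly more careful rendering of the paper's own argument.
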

\begin{proof}
 Due to $\eqref{eq 15}$ and $0<\xi_1<1$, we have $\forall{j},~0\le{U_j}\le{E}$. (QED)
\end{proof}
\begin{theorem}
Let $u(x)$ be the viscosity solution of \eqref{eq 4}. If $\left|r-q-\frac{\sigma^2}{2}\right|{\frac{\Delta{x}}{\sigma^2}}\le{1}$, then $u_{\Delta{x}}(x)$converges to $u(x)$ as $\Delta{x}\rightarrow{0}$.
\end{theorem}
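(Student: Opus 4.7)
The plan is to apply the Barles–Souganidis framework for convergence of monotone, stable, consistent schemes to viscosity solutions. Three of the required ingredients are already in hand: monotonicity is Lemma 5, stability is Lemma 7 (which gives $0\le u_{\Delta x}(x)\le E$ uniformly in $\Delta x$), and the fixed-point form $\mathbf{F}(\mathbf{U})=\mathbf{U}$ identifies the scheme with \eqref{eq 7}. What must be supplied is the consistency of the discrete operator $I_j$ with the BS operator in the $x$-variable and the extraction of a viscosity-solution limit via the half-relaxed limits, after which uniqueness (Theorem 3, transported through the change of variables $V(e^x)=u(x)$) identifies the limit with $u$.

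First I would define the half-relaxed limits
\begin{equation*}
\bar u(x)=\limsup_{\Delta x\to 0,\,y\to x}u_{\Delta x}(y),\qquad \underline u(x)=\liminf_{\Delta x\to 0,\,y\to x}u_{\Delta x}(y),
\end{equation*}
which are well-defined on $\mathbf{R}$ by stability. I would then show that $\bar u$ is a viscosity subsolution and $\underline u$ a viscosity supersolution of \eqref{eq 5}. For the subsolution property, take $\Phi\in C^2(\mathbf{R})$ and assume $\bar u-\Phi$ attains a strict local maximum at $x_0$. By a standard perturbation argument, one finds sequences $\Delta x_n\to 0$ and grid indices $j_n$ with $x_n:=j_n\Delta x_n+c\to x_0$ such that $U^{(n)}_{j}-\Phi(j\Delta x_n+c)$ is locally maximized at $j=j_n$, where $U^{(n)}$ denotes the solution of \eqref{eq 7} at step $\Delta x_n$. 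This gives the pointwise bounds $U^{(n)}_{j_n\pm 1}\le U^{(n)}_{j_n}+\Phi(x_n\pm\Delta x_n)-\Phi(x_n)$.

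Substituting into $I_{j_n}(U^{(n)})$, using $w=\sigma^2\Delta t/\Delta x^2$, $a=\tfrac12+(r-q-\tfrac{\sigma^2}{2})\Delta x/(2\sigma^2)$, and Taylor-expanding $\Phi$ to second order about $x_n$ gives
\begin{equation*}
\frac{I_{j_n}(U^{(n)})}{\Delta t}\ge -\frac{\sigma^2}{2}\Phi''(x_n)-\Bigl(r-q-\frac{\sigma^2}{2}\Bigr)\Phi'(x_n)+rU^{(n)}_{j_n}+O(\Delta x_n),
\end{equation*}
which is the consistency step. The equation \eqref{eq 6} yields $\min\{I_{j_n}(U^{(n)})/\Delta t,\,U^{(n)}_{j_n}-\varphi_{j_n}\}\le 0$, and passing to the limit $n\to\infty$, using $U^{(n)}_{j_n}\to\bar u(x_0)$ along the chosen subsequence and continuity of $\varphi$, produces the subsolution inequality at $x_0$. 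The supersolution argument for $\underline u$ is symmetric, applied to a local minimum of $\underline u-\Phi$ and the reversed inequality from \eqref{eq 6}. The boundary conditions $\bar u(-\infty)\le E$, $\bar u(+\infty)\le 0$ and the reverse for $\underline u$ follow from the explicit form \eqref{eq 15}: $U_j\to E$ as $j\to-\infty$ uniformly in $\Delta x$, and $U_j=c_1\xi_1^{j-j_*}\to 0$ as $j\to+\infty$ uniformly in $\Delta x$ once one checks $0<\xi_1<1$ uniformly and that $c_1\le E$. Once both relaxed limits are viscosity sub/supersolutions, the uniqueness result (Theorem 3 lifted to the $x$-variable, with $u$ itself being both sub- and supersolution) forces $\bar u\le u\le \underline u$, which combined with the trivial $\underline u\le\bar u$ yields $\bar u=\underline u=u$ and hence locally uniform convergence $u_{\Delta x}\to u$.

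The main obstacle is the consistency expansion: because $a$ depends on $\Delta x$ and the scheme discretizes in $\log$-price rather than price, the Taylor expansion of $a\Phi(x+\Delta x)+(1-a)\Phi(x-\Delta x)$ must be carried out to sufficient order to show that the second-order term $\tfrac12\sigma^2\Phi''$ and the first-order drift $(r-q-\tfrac{\sigma^2}{2})\Phi'$ emerge with the correct coefficients after multiplication by $1/\Delta t$, with an $O(\Delta x)$ remainder that vanishes under the stability hypothesis $|r-q-\sigma^2/2|\Delta x/\sigma^2\le 1$. A secondary difficulty is that the problem is posed on the unbounded domain $\mathbf{R}$, so the local maximum in the Barles–Souganidis argument has to be turned into a strict global maximum via the usual coercive-penalty truncation $\Phi(x)+\varepsilon x^2$, and one must verify that the boundary conditions at $\pm\infty$ are preserved in the relaxed sense so that the comparison principle can be invoked.
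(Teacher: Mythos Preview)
Your proposal is correct and follows the same Barles--Souganidis strategy as the paper: define the half-relaxed limits, prove they are viscosity sub- and supersolutions respectively, and conclude by comparison. The only real difference is in how the consistency step is executed. The paper packages the local-maximum inequality through the operator $\mathbf{F}$: from $u_{\Delta x_{k_i}}\le \Phi+(u_{\Delta x_{k_i}}-\Phi)(\hat y_{k_i})$ it applies monotonicity (Lemma~5) and the constant-shift estimate $\mathbf{F}(\mathbf{U}+\mathbf{K})\le \mathbf{F}\mathbf{U}+\mathbf{K}$ (Lemma~6) to obtain $\Phi(\hat y_{k_i})-\{\mathbf{F}[\Phi]\}(\hat y_{k_i})\le 0$, and only then expands this in $\Delta x$. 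You instead substitute the pointwise bounds $U_{j_n\pm 1}\le U_{j_n}+\Phi(x_n\pm\Delta x_n)-\Phi(x_n)$ directly into $I_{j_n}(U)$ and Taylor-expand. Both routes yield the same limiting inequality; the paper's version is slightly more abstract and makes the role of Lemma~6 explicit, while yours is more hands-on. You are also more careful than the paper about two points it glosses over: the comparison principle on the unbounded line (your coercive-penalty remark) and the verification of the boundary behaviour at $\pm\infty$ for the relaxed limits.
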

\begin{proof}
Denote
\begin{eqnarray*}
u^*(x)=\lim_{\Delta{x}\rightarrow{0}}\sup_{y\rightarrow{x}}u_{\Delta{x}}(y),~~u_*(x)=\lim_{\Delta{x}\rightarrow{0}}\inf_{y\rightarrow{x}}u_{\Delta{x}}(y)
\end{eqnarray*}
From Lemma 7, $u^*$ and $u_*$ are well-defined and we have $0\le{u_*(x)}\le{u^*(x)}\le{E}$. Obviously, $u^*\in{USC(\mathbf{R})}$ and $u^*\in{LSC(\mathbf{R})}$. If we prove that $u^*$ is the subsolution and $u_*$ the supersolution of $\eqref{eq 4}$, then we have $u^*\le{u_*}$ and thus $u^*=u_*=u(x)$ becomes the viscosity solution of $\eqref{eq 4}$, and therefore we have the convergence of the approximate solution $u_{\Delta{x}}(x)$.
We will prove that $u^*$ is the subsolution of $\eqref{eq 4}$. (The fact that $u^*$ is the supersolution is similarly proved.) Suppose that $\phi\in{C^2({\mathbf{R}})}$ and $u^*-\phi$ attains a local maximum at $x_0\in\mathbf{R}$. We might as well assume that $(u^*-\phi)(x_0)=0$ and $x_0$ is a strict local maximum on $B_r=\{x:\left|x-x_0\right|\le{r}\},~r>0$. Let $\Phi=\phi-\varepsilon.~~\varepsilon>0$, then $u^*-\Phi$ attains a strict local maximum at $x_0$ and
\begin{equation}\label{eq 16}
(u^*-\Phi)(x_0)>0
\end{equation}
From the definition of $u^*$, there exists a sequence $u_{\Delta{x}_k}(y_k)$ such that $\Delta{x}_k\rightarrow{0}$, $y_k\rightarrow{x_0}$ and 
\begin{equation}\label{eq 17}
\lim_{k\rightarrow\infty}u_{\Delta{x}_k}(s_k,y_k)=u^*(t_0,x_0)
\end{equation} 
If we denote the global maximum point of $u_{\Delta{x}_k}-\Phi$ on $B_r$ by $\hat{y}_k$, then there exists a subsequence $u_\Delta{x}_{k_i}$ such that when $k_i\rightarrow\infty$, we have 
\begin{equation}\label{eq 18}
\Delta{x}_{k_i}\rightarrow{0},\hat{y}_{k_i}\rightarrow{x_0}~~and~~ \left(u_\Delta{x}_{k_i}-\Phi\right)(\hat{y}_{k_i})\rightarrow\left(u^*-\Phi\right)(x_0)
\end{equation}
Indeed, suppose $\hat{y}_{k_i}\rightarrow\hat{y}$, then from $\eqref{eq 17}$ we have
\begin{eqnarray*}
\left(u^*-\Phi\right)(x_0)=\lim_{{k_i}\rightarrow\infty}\left(u_{\Delta{x_{k_i}}}-\Phi\right)(y_k)\le\lim_{k_i\rightarrow\infty}\left(u_{\Delta{x}_{k_i}}-\Phi\right)(\hat{y}_{k_i})\le\left(u^*-\Phi\right)(\hat{y}).
\end{eqnarray*}
Therefore we have $\hat{y}=x_0$, since $x_0$ is a strict local maximum of $\left(u^*-\Phi\right)$. 
Thus for sufficiently large $k_i$, if $(\hat{S}_{k_i}+\Delta{t}(\Delta{x_{k_i}}))\in{B_r}$, then we have 
\begin{eqnarray*}
\left(u_{\Delta{x}_{k_i}}-\Phi\right)(x)\le\left(u_{\Delta{x}_{k_i}}-\Phi\right)(\hat{y}_{k_i})
\end{eqnarray*}
that is,
\begin{equation}\label{eq 19}
u_{\Delta{t}_{k_i}}(x)\le\Phi(x)+\left(u_{\Delta{x}_{k_i}}-\Phi\right)(\hat{y}_{k_i})
\end{equation}
From $\eqref{eq 16}$ and $\eqref{eq 18}$, when $k_i$ is sufficiently large, then we have 
\begin{equation}\label{eq 20}
\left(u_{\Delta{x}_{k_i}}-\Phi\right)(\hat{y}_{k_i})>0.
\end{equation}
For every $k_i$, $j_{k_i}=j$ select  such that $\hat{S}_{k_i}\in[t_n,t_{n+1})$, $\hat{y}_{k_i}\in[(j-1/2)\Delta{x}_{k_i}+c,(j+1/2)\Delta{x}_{k_i}+c)$. Then from $\eqref{eq 19}$, Lemma 5 and Lemma 6 we have
\begin{eqnarray*}
u_{\Delta{x}_{k_i}}(\hat{y}_{k_i})&=&U_j=(\mathbf{FU})_j=[\mathbf{Fu}_{\Delta{x}_{k_i}}(\bullet)](\hat{y}_{k_i})\\
&\le&\left\{\mathbf{F}[\Phi(\bullet)+(u_{\Delta{x}_{k_i}}-\Phi)(\hat{y}_{k_i})]\right\}(\hat{y}_{k_i}).\\
&\le&\left\{\mathbf{F}[\Phi(\bullet)]\right\}(\hat{y}_{k_i})+(u_{\Delta{x}_{k_i}}-\Phi)(\hat{y}_{k_i}).
\end{eqnarray*}
Thus we have
\begin{eqnarray*}
\Phi(\hat{y}_{k_i})-\left\{\mathbf{F}[\Phi(\bullet)]\right\}(\hat{y}_{k_i})\le{0}.
\end{eqnarray*}
Therefore using $\eqref{eq 7}$ we have
\begin{align*}
\Phi&(\hat{y}_{k_i})-\left\{\mathbf{F}[\Phi(\bullet)]\right\}(\hat{y}_{k_i})=\\
&=\Phi(\hat{y}_{k_i})-max\left\{\frac{1}{\rho}\left\{(1-w)\Phi_{j}+w[a\Phi_{j+1}+(1-a)\Phi_{j-1}]\right\},~\varphi_j\right\}\\
&=\Phi(\hat{y}_{k_i})-max\left\{\frac{1}{1+r\Delta{t}}\left[\left(1-\frac{\sigma^2\Delta{t}}{\Delta{x_{k_i}}}\right)\Phi(\hat{y}_{k_i})+\frac{\sigma^2\Delta{t}}{\Delta{x_{k_i}}}\left[\left(\frac{1}{2}+(r-q-\frac{\sigma^2}{2})\frac{\Delta{x}_{k_i}}{2\sigma^2}\right)\Phi(\hat{y}_{k_i}+\Delta{x}_{k_i})+\right.\right.\right.\\
&\left.\left.\left.\left(\frac{1}{2}
-(r-q-\frac{\sigma^2}{2})\frac{\Delta{x}_{k_i}}{2\sigma^2}\right)\Phi(\hat{y}_{k_i}-\Delta{x}_{k_i})\right]\right],~\varphi_j\right\}\le0.\\
\end{align*}
This inequality is equivalent to the following.
\begin{align*}
\min&\left\{\frac{\Delta{t}}{1+r\Delta{t}}\left[\frac{\Phi(\hat{y}_{k_i})-\Phi(\hat{y}_{k_i})}{\Delta{t}}\right.\right.-\\
&-\frac{\sigma^2}{2}\frac{\Phi(\hat{y}_{k_i}+\Delta{x_{k_i}})-2\Phi(\hat{y}_{k_i})+\Phi(\hat{y}_{k_i}-\Delta{x_{k_i}})}{2\Delta{x_{k_i}}}\\
&\left.\left.-\left(r-q-\frac{\sigma^2}{2}\right)\frac{\Phi(\hat{y}_{k_i}+\Delta{x_{k_i}})-\Phi(\hat{y}_{k_i}-\Delta{x_{k_i}})}{2\Delta{x_{k_i}}}+r\Phi(\hat{y}_{k_i})\right],~~~\Phi(\hat{y}_{k_i})-\varphi_j\right\}\le0.
\end{align*}
Noting that $\frac{\Delta{t}}{1+r\Delta{t}}$, we have
\begin{align*}
\min&\left\{-\frac{\sigma^2}{2}\frac{\Phi(\hat{y}_{k_i}+\Delta{x_{k_i}})-2\Phi(\hat{y}_{k_i})+\Phi(\hat{y}_{k_i}-\Delta{x_{k_i}})}{2\Delta{x_{k_i}}}\right.\\
&\left.-\left(r-q-\frac{\sigma^2}{2}\right)\frac{\Phi(\hat{y}_{k_i}+\Delta{x_{k_i}})-\Phi(\hat{y}_{k_i}-\Delta{x_{k_i}})}{2\Delta{x_{k_i}}}+r\Phi(\hat{y}_{k_i}),~~~\Phi(\hat{y}_{k_i})-\varphi_j\right\}\le0.
\end{align*}
Let $k_i\rightarrow\infty$, then $\Delta{x_{k_i}}\rightarrow0$ and we have
$$
\min\left\{-\frac{\sigma^2}{2}\frac{\partial^2\Phi}{\partial{x}^2}-\left(r-q-\frac{\sigma^2}{2}\right)\frac{\partial\Phi}{\partial{x}}+r\Phi,~~\Phi-\varphi\right\}_{x_0}\le0.
$$
(Here we considered $\hat{y_{k_i}}\rightarrow{x_0},~~\varphi_{k_i}\rightarrow\varphi(x_0)$.) Here let $\varepsilon\rightarrow0$ then we have
$$
\min\left\{-\frac{\sigma^2}{2}\frac{\partial^2\phi}{\partial{x^2}}-\left(r-q-\frac{\sigma^2}{2}\right)\frac{\partial\phi}{\partial{x}}+r\phi,~\phi-\varphi\right\}_{x_0}\le0.
$$
Since $u^*(x_0)=\phi(x_0)$, $u^*$ is a subsolution of $\eqref{eq 18}$. Thus we proved theorem. (QED)
\end{proof}
{\\}

{\bf Remark.} The result of this section can be viewed as an extension of the results of [JD] to the case with the expiry time $T=\infty$

%%%%%%%%%%%%%   $3.     The Limit of the Price of American Option with respect to Maturity
\section{The Limit of the Price of American Option with respect to Maturity}
\indent
Naturally, the price of perpetual American option can be seen as a limit of the price of American Option when the maturity goes to infinity. This approach excludes the apriori assumtion that the price of perpetual American option does not depend on time. In this section we show that the limits of the prices of variational inequality and BTM models for American Option when the maturity goes to infinity do not depend on time.

%%%%%%%%%%%%%   $3.1 
\subsection{The Limit of the Solution to a Variational Inequality Model}

\indent

When $r>0,~q$ and $\sigma>0$ are constants, the Black-Scholes operator is defined as follows:
$$
-LV=-\frac{\partial{V}}{\partial{t}}-\frac{\sigma^2}{2}S^2\frac{\partial^2{V}}{\partial{S^2}}-(r-q)S\frac{\partial{V}}{\partial{S}}+rV
$$

Let $V(S,t;T)$ be the price of American put option with the maturity $T$, that is, the solution to the variational inequality 
\begin{eqnarray}\label{eq 21}
&&\min\{-LV,V-\varphi\}=0,~~~~~~~~~~~~~0<t<T,~S>0\nonumber\\
&&V(S,T)=\varphi(S)=(E-S)^+,~~~~~S>0
\end{eqnarray}

Then the solution $V(S,t,T)$ uniquely exists and $V(S,t,T)$ is decreasing with respect to $S$ and $t$ and increasing on $T$. Furthermore, $V(S,t,T)$ is bounded. That is, we have 
\begin{equation}\label{eq 22}
0\le{V(S,t,T)}\le{E},~~~0<S<\infty,~~0\le{t}\le{T}
\end{equation}
Thus, the following limit exists.
\begin{equation}\label{eq 23}
U(S,t)=\lim_{T\rightarrow\infty}V(S,t,T)=sup_{T}V(S,t,T),~~0<S<\infty,~~0\le{t}<\infty.
\end{equation}
$U(S,t)$ is decreasing on $S$ and $t$ and it can be seen as "the price of perpetual American put option" but this function seems to be time dependent. 

On the other hand, it is natural to think that $U(S,t)$ is the solution to the following problem.
\begin{eqnarray}\label{eq 24}
&&\min\{-LU,~~U-\phi\}=0,~~~~~~~~~~~~~t<0,~S>0\nonumber\\
&&V(0+,t)=E,~~V(+\infty,t)=0,~~~~~S>0.
\end{eqnarray}

Thus $\eqref{eq 24}$ can be seen as "the pricing model of perpetual American put option". it is different from the model of \cite{Jia} derived under the assumption that the price of perpetual American put option is independent on time. This variational inequality does not belong to the range of application of a general theory of existence and uniqueness of solution to variational inequalities discussed in \cite{Fri} or \cite{CIL} because the spatial variable and time variable intervals are both infinite intervals. The price function of perpetual American option consructed in \cite{Jia} satisfies $\eqref{eq 24}$. So if we can prove the uniquness of the solution to $\eqref{eq 24}$, then we will have the independence on time variable of the solution to $\eqref{eq 24}$. Thus we will proce the uniquness of the solution to $\eqref{eq 24}$. 

If $U(S,t)$ is the solution to $\eqref{eq 24}$, then we have $-LU>0$ on the region $\Sigma_1$(exercise region or stopping region) where $U(S,t)=(E-S)^+$ and $-LU=0$ on the region $\Sigma_2$(the continuation region) where $U(S,t)>(E-S)^+$. The solution to $\eqref{eq 24}$ is always nonnegative.

The parabolic boundary of the region $A=(a,b)\times(0,T)(0\le{a}<b\le\infty,T>0)$ is defined as follows:
$$
\partial_p{A}=\{a\}\times(0,T)\cup\{b\}\times(0,T)\cup(a,b)\times\{T\}.
$$

\begin{theorem}
(maximum principle of the Black-Scholes differential operator)\\
1) For $V(S,t)\in{C^{2,1}(A)}$, if $-LV<(>0) (S,t)\in{A}$, the nonnegative maximum (nonpositive minimum) value of $V$ cannot be attained at the parabolic boundary of $A$. Furthermore, if $-LV\le(\ge)0 (S,t)\in{A}$, then we have 
\begin{equation}\label{eq 25}
\sup_{x\in{A}}V(x)=\sup_{x\in\partial_p{A}}V^+(x)~~~(\inf_{x\in{A}}V(x)=\inf_{x\in\partial_p{A}}V^-(x))
\end{equation}\\
2) Fix $t>0$. Let $A_t=\{(S,t)\in{A}\}=(a,b)\times\{t\}$. If $-LV<0,~~(S,t)\in{A_t}$ and  $V_t\le0$ ($V$ decreasing on $t$), then we have 
\begin{equation*}
\sup{V(x)}=max\{V^+(a,t),V^+(b,t)\}.
\end{equation*}
\end{theorem}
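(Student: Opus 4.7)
The plan is to mimic the proof of Theorem 1 in the space-time setting. The key extra ingredient is the one-sided time-derivative sign at a candidate extremum, together with a small shift to reduce the non-strict inequality to the strict one. The positivity of $r$ enters crucially in both steps.

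For part 1), assume first the strict inequality $-LV<0$ on $A$ and suppose for contradiction that the nonnegative maximum $M=V(S_0,t_0)\ge 0$ is attained at some $(S_0,t_0)$ with $a<S_0<b$ and $0\le t_0<T$ (i.e., off the parabolic boundary). If $0<t_0<T$, it is a true interior point and we have $V_t(S_0,t_0)=V_S(S_0,t_0)=0$ and $V_{SS}(S_0,t_0)\le 0$. If $t_0=0$, the spatial derivative conditions still hold at $S_0$, and since $V(S_0,t)$ cannot exceed $M$ for $t$ slightly greater than $0$, the one-sided derivative satisfies $V_t(S_0,0)\le 0$. In either subcase, substituting into the operator yields
$$-LV\big|_{(S_0,t_0)} = -V_t - \tfrac{\sigma^2}{2}S_0^2 V_{SS} - (r-q)S_0 V_S + rM \ge 0,$$
contradicting $-LV<0$. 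The nonpositive minimum statement is obtained by applying the above to $-V$. To remove the strictness, set $u=V-\varepsilon$ for $\varepsilon>0$; a direct computation gives $-Lu=-LV-r\varepsilon<0$, since $r>0$. Applying the strict case to $u$ and letting $\varepsilon\to 0$ yields the identity (25), exactly as in the proof of Theorem 1.

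For part 2), the slice argument rests on replacing the missing time-differentiation by the hypothesis $V_t\le 0$. Suppose the nonnegative maximum of $V(\cdot,t)$ on $[a,b]$ is attained at an interior $S_0\in(a,b)$ and strictly exceeds $\max\{V^+(a,t),V^+(b,t)\}$. Then $V_S(S_0,t)=0$, $V_{SS}(S_0,t)\le 0$, and by assumption $V_t(S_0,t)\le 0$. The same computation as above again gives $-LV(S_0,t)\ge 0$, contradicting $-LV<0$ on $A_t$. Hence the supremum on the slice must be realized at one of the two spatial endpoints.

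The main technical obstacle is controlling what happens at the bottom face $(a,b)\times\{0\}$, which is not part of $\partial_p A$: only a one-sided time derivative is available there, so the ordinary interior-critical-point argument does not literally apply. Fortunately, at any maximum attained on this face the one-sided inequality $V_t\le 0$ is automatic, and combined with $r>0$ this furnishes the required sign to close the argument. The same positivity of $r$ is what makes the perturbation $u=V-\varepsilon$ strictly lower $-LV$ by a fixed positive amount $r\varepsilon$, which is precisely what allows the reduction from $-LV\le 0$ to $-LV<0$ and hence the passage to the identity (25).
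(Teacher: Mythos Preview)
Your proof is correct and follows essentially the same route as the paper: for part 1) you argue by contradiction at a putative interior (or bottom-face) maximum using the first/second order conditions and the one-sided inequality $-V_t\ge 0$ at $t_0=0$, then pass from $-LV\le 0$ to the strict case via the shift $u=V-\varepsilon$ exploiting $r>0$; for part 2) you replace the time-derivative condition by the hypothesis $V_t\le 0$ and repeat the spatial second-order argument. This is exactly the paper's argument, with somewhat more explicit commentary on the role of the bottom face and of $r>0$.
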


\begin{proof}
1) In fact, suppose that it attains the nonnegative maximum value at the parabolic interior though $-LV<0$, there exists such a point $x_0=(S_0,t_0)$ that 
$$
V(x_0)=\max_{x\in{A}}V(x)=M\ge0
$$
Then, we have $a<S_0<b,~~0\le{t_0}<T$. If $a<S_0<b,~~0<t_0<T$, then we have 
$$
V_t(x_0)=V_S(x_0)=0,~~V_{SS}(x_0)\le0
$$
Then, we have 
$$
-LV(x_0)=-V_t-\frac{\sigma^2}{2}S^2V_{SS}-(r-q)SV_S+rV|_{x_0}\ge0
$$
This contradicts $-LV<0$. If $a<S_0<b,~t_0=0$, then we have
$$
-V_t(x_0)\ge0,~~V_S(x_0)=0,~~V_{SS}(x_0)\le0
$$
Thus, we have $-LV(x_0)\ge0$. This contradicts $-LV<0$.\\
If $-LV\le0$, let $u=V-\varepsilon$ then have $-Lu=-LV-r\varepsilon<0$, thus we have
$$
\sup_{x\in{A}}u(x)=\sup_{x\in{\partial_p{A}}}u^+(x).
$$ 
Thus we have  
$$
\sup_{x\in{A}}V(x)\le{\sup_{x\in{A}}}u(x)+\varepsilon=\sup_{x\in{\partial_p{A}}}u^+(x)+\varepsilon\le\sup_{x\in{\partial_p{A}}}V^+(x)+\varepsilon.
$$
and here let $\varepsilon\rightarrow0$, then we have $\eqref{eq 25}$.\\
2) If $a<S_0<b$ and $V(S_0,t)=max_{A_t}V(S,t)=M\ge0$, then we have $V_S(S_0,t)=0,V_{SS}(S_0,t)\le0$, and from decreasing on $t$, we have $-V_t(S_0,t)\ge0$. Thus, we have $-LV(S_0,t)\ge0$. (QED)
\end{proof}
\begin{lemma}
Assume that $V(S,t)$ is the solution to $\eqref{eq 24}$ and $V(S,t)=(E-S)^+$. If $q\ge0$, then we have $S\le\min\{rE/q,E\}$. If $q<0$, then we have $S\le{E}$.
\end{lemma}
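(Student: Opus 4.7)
The plan is to run the same three-step argument used in Lemma 1, adapted to the time-dependent variational inequality \eqref{eq 24}. The key observation that makes the adaptation essentially free is that on the stopping region where $V(S,t)=(E-S)^+$, the function does not actually depend on $t$, so the parabolic operator $-L$ reduces on this set to precisely the ordinary-differential quantity handled in Section~2.1.

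First I would establish that $V(S,t)=(E-S)^+$ forces $S\le E$. This is the direct analogue of the opening remark of Lemma 1 (referenced there to \cite{Jia}): the American put price is strictly positive whenever $r>0$ (it dominates, for instance, the value obtained from any continuation strategy that waits an infinitesimal time), whereas the intrinsic value $(E-S)^+$ vanishes for $S>E$; so the stopping set is contained in $\{S\le E\}$, and on it we may write $V(S,t)=E-S$.

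Next, on the stopping region I would compute $-LV$ by direct substitution. Since $V(S,t)=E-S$ gives $V_t=0$, $V_S=-1$, $V_{SS}=0$, one obtains
$$-LV=-V_t-\frac{\sigma^2}{2}S^2V_{SS}-(r-q)SV_S+rV=(r-q)S+r(E-S)=rE-qS.$$
Because $V$ solves \eqref{eq 24}, the constraint $-LV\ge 0$ must hold throughout the stopping region, yielding $rE-qS\ge 0$ at every such point.

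Finally I split on the sign of $q$. When $q>0$, the inequality $rE\ge qS$ rearranges to $S\le rE/q$, and together with $S\le E$ this gives $S\le\min\{rE/q,E\}$. When $q=0$, only $S\le E$ remains, which agrees with $\min\{rE/q,E\}$ under the convention $rE/0=+\infty$. When $q<0$, the inequality $rE-qS\ge 0$ is automatic for $S>0$, so again only $S\le E$ survives. The single nonroutine point is the first step, which in the parabolic setting needs strict positivity of $V(S,t)$ on $\{S>E\}$; this is a standard comparison-principle argument exactly parallel to what is invoked (via \cite{Jia}) for the perpetual case, and no new ingredient is required.
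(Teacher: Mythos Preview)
Your proof is correct and follows essentially the same approach as the paper: first argue $S\le E$ (the paper does this by noting that zero exercise payoff forces $S$ into the continuation region, which amounts to your positivity observation), then compute $-LV=rE-qS$ on the stopping set using $V_t=0$, and conclude from $-LV\ge 0$. The only cosmetic difference is that you spell out the derivative computation and the $q=0$, $q<0$ subcases explicitly, whereas the paper leaves these implicit.
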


\begin{proof}
First, note that if $V(S,t)=(E-S)^+$ then $S\le{E}$[8] (That is why, in this case $S$ is in the stopping region and If we suppose that $S>E$, then $(E-S)^+=0$, that is exercise payoff is zero and thus $S$ belongs to the continuation region. It contradicts $V(S,t)=(E-S)^+$.) Thus if $V(S,t)=(E-S)^+$ then we can rewrite as $V(S,t)=E-S$. And $\eqref{eq 24}$ is written as
$$
\min\{-LV,V-(E-S)\}=0.
$$
Since $(V-(E-S))=0$ , then we have $-LV=rE-qS\ge0$ and if $q\ge0$, then we have $S\le{rE/q}$.(QED)
\end{proof}

\begin{lemma}
Let $V(S,t)$ be the solution to $\eqref{eq 24}$ and fix $t>0$. Then\\
1)	If there exists such $S_0>0$ that $V(S_0,t)=(E-S_0)^+$ , then for $\forall{S<S_0}$, we have $V(S,t)=(E-S)^+$.\\
2)	If there exists such $S_1>0$ that $V(S_1,t)>(E-S_1)^+$, then for $\forall{S>S_1}$, we have $V(S,t)>(E-S)^+$.
\end{lemma}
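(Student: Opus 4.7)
The plan is to run the same three-step argument that proved Lemma 2.2, upgraded from the ODE setting to the parabolic setting by calling on the fixed-time maximum principle of Theorem 3.1 part 2 and on Lemma 3.1 for the location bound. Specifically, for part (1), fix $t_0 > 0$ and assume $V(S_0, t_0) = (E - S_0)^+$. Lemma 3.1 gives $S_0 \le \min\{rE/q, E\}$ (with $rE/q$ read as $+\infty$ when $q \le 0$). I argue by contradiction: assume there exists $S^* < S_0$ with $V(S^*, t_0) > (E - S^*)^+$, and let $(a, b)$ be the maximal open subinterval of $(0, S_0)$ containing $S^*$ on which $V(\cdot, t_0) > (E - \cdot)^+$. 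By maximality and continuity of $V$, we have $V(a, t_0) = (E - a)^+$ and $V(b, t_0) = (E - b)^+$, with the convention that at $a = 0$ this reads as the left boundary datum $V(0+, t_0) = E$.

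On the slice $(a, b) \times \{t_0\}$ the variational inequality \eqref{eq 24} forces $-LV = 0$ because $V > \varphi$ there. Setting $W(S, t) := V(S, t) - (E - S)$ and using linearity,
$$-LW = -LV + L(E - S) = 0 + (qS - rE).$$
For $S \in (a, b)$, when $q > 0$ we have $S < S_0 \le rE/q$, so $qS - rE < 0$; when $q \le 0$, $qS - rE \le -rE < 0$ trivially. Hence $-LW < 0$ strictly on the slice. Moreover $W_t = V_t \le 0$, inherited from the fact that the solution in question is $V = U(S, t) = \sup_T V(S, t, T)$, a supremum of functions each decreasing in $t$ by the standard monotonicity of the finite-maturity price in \eqref{eq 21}. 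Theorem 3.1 part 2 therefore gives
$$\sup_{S \in (a, b)} W(S, t_0) = \max\{W^+(a, t_0),\, W^+(b, t_0)\} = 0,$$
since $W$ vanishes at both endpoints. This contradicts $W > 0$ on $(a, b)$ by construction.

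Part (2) is immediate by contraposition: if there were some $S > S_1$ with $V(S, t) = (E - S)^+$, part (1) applied with that $S$ in the role of $S_0$ would force $V(S_1, t) = (E - S_1)^+$, contradicting the hypothesis.

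The main obstacle is not the algebra but justifying the hypothesis $V_t \le 0$ required by Theorem 3.1 part 2: this is not a consequence of being an abstract solution of \eqref{eq 24}, and must be imported from the specific construction $V = U = \lim_{T \to \infty} V(\cdot,\cdot,T)$ introduced in \eqref{eq 23}, where the $t$-monotonicity of each finite-maturity price transfers to the pointwise supremum. Once that structural property is in hand, the rest is a routine transcription of the Lemma 2.2 argument, with the only subtlety being the endpoint case $a = 0$, which is handled by the boundary condition in \eqref{eq 24}.
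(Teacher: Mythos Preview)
Your argument is correct and follows essentially the same route as the paper's: maximal subinterval, compute $-L(V-(E-S))=qS-rE<0$ on the slice, invoke part 2 of Theorem 3.1, and derive the contradiction from the vanishing boundary values; part (2) by contraposition is identical. You are in fact more careful than the paper, which applies Theorem 3.1 part 2 without pausing to note that the hypothesis $V_t\le 0$ is not intrinsic to an arbitrary solution of \eqref{eq 24} but must be imported from the construction \eqref{eq 23} (the paper only makes this explicit later, in the statement of Theorem 3.2).
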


\begin{proof}
1) From Lemma 1, we have $S_0\le{rE/q}$. Suppose that the conclusion were not true, that is, suppose that
\begin{equation}\label{eq *}
0<\exists{S<S_0}:V(S,t)>E-S.\tag{*}
\end{equation} 
Let $(a,b)(b\le{S_0})$ be the longest interval where holds $\eqref{eq *}$. Then, we have $V(S,t)=(E-S)$ at $S=a$ and $S=b$. From $\min\{-LV,V-\phi\}=0$, we have $-LV=0,V-(E-S)>0$, on that interval $(a, b)$. Thus we have
$$
-L(V-(E-S))=L(E-S)=qS-rE<0~~(\because{q\ge0}\Rightarrow{S<S_0\le{rE/q}}).
$$
Thus from the conclusion of 2) of Theorem 1, $V-(E-S)$ attains the nonnegative maximum value at the boundary. But $V(a,t)-(E-S)=0,V(b,t)-(E-S)=0$, so we have $V(S,t)-(E-S)\le0,~~a<\forall{S}<b$. This contradicts $\eqref{eq *}$.\\
2) If $\exists{S_2>S_1}:V(S_2,t)=(E-S_2)^+$, then from the conclusion of 1) we have $V(S_1,t)=(E-S_1)^+$. This contradicts the assumption (QED).
\end{proof}

\begin{theorem}
(Existence of exercise boundary) When $V(S,t)$ is the solution to $\eqref{eq 24}$ and decreasing on $t$, then for any $t>0$, there exists such $s(t)(\le\min\{rE/q,E\})$ that if $0<S<s(t)$ then $V(S,t)=E-S$ and if $s(t)<S$ then $V(S,t)>(E-S)^+$.
\end{theorem}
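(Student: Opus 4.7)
The plan is to define $s(t)$ as the supremum of the stopping region at fixed time $t$ and then read off both halves of the statement from the two preceding lemmas. Fix $t>0$ and set
$$\Sigma_1(t):=\{S>0:V(S,t)=(E-S)^+\},\qquad s(t):=\sup\Sigma_1(t),$$
with the convention $\sup\emptyset=0$. Lemma~1 applied pointwise at this $t$ gives $\Sigma_1(t)\subseteq(0,\min\{rE/q,E\}]$, so $s(t)\le\min\{rE/q,E\}<\infty$ and $s(t)$ is well defined.

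For the first assertion, fix $0<S<s(t)$. By the definition of supremum there exists $S_0\in\Sigma_1(t)$ with $S<S_0\le s(t)$, and part~1) of Lemma~2 applied to this $S_0$ yields $V(S,t)=(E-S)^+$. For the second assertion, fix $S>s(t)$. Then $S\notin\Sigma_1(t)$ by the very definition of $s(t)$, so $V(S,t)\neq(E-S)^+$; combined with the universal inequality $V(S,t)\ge(E-S)^+$ (the American put's value dominates its immediate-exercise payoff, a standard consequence of $\min\{-LV,V-\varphi\}=0$), this forces $V(S,t)>(E-S)^+$, as required.

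I do not anticipate a serious obstacle at this final step, since the analytic content has already been packaged into Lemmas~1 and~2. The substantive use of the hypothesis that $V(S,t)$ is decreasing in $t$ occurs inside the proof of part~1) of Lemma~2, where part~2) of Theorem~1 (whose conclusion that a nonnegative maximum on a horizontal slice $A_t$ is attained at $S=a$ or $S=b$ crucially uses $V_t\le 0$) is applied on the subregion where $V>E-S$. The only edge case worth flagging is $s(t)=0$, corresponding to an empty stopping region at time $t$: here the first conclusion is vacuously true and the second gives $V(S,t)>(E-S)^+$ for every $S>0$, so the statement of the theorem remains correct.
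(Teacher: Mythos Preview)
Your proof is correct and follows essentially the same route as the paper's: you define $s(t)$ as the supremum of the stopping region, whereas the paper takes the infimum of the continuation region (the paper's printed text says ``stopping region'' here, but this is evidently a slip, as the parallel Theorem~2 in Section~2 makes clear), and both then read off the two halves directly from Lemmas~1 and~2. Your version is slightly more explicit about why $S>s(t)$ forces $V(S,t)>(E-S)^{+}$ and about the edge case $s(t)=0$, but there is no substantive difference in strategy.
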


\begin{proof}
Fix $t>0$, then $(E,+\infty)\times\{t\}$ is included in the stopping region, so the stopping region is not empty. Let $s(t)$ be the infimum of the stopping region, then from Lemma 2, $(s(t),+\infty)$ becomes the stopping region and $(0,s(t))$ becomes the continuation region (QED).
\end{proof}

\begin{theorem}
(Uniqueness of the solution) The solution to $\eqref{eq 24}$ that is decreasing on $t$ is unique.
\end{theorem}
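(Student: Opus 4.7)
The plan is to mimic the proof of the stationary uniqueness result from Section 2 in the two variables $(S,t)$, using the parabolic maximum principle of this section in place of its elliptic counterpart. The new technical features are that each exercise boundary is now a curve $s_i(t)$ rather than a single point, and the space-time region of interest extends to $t=\infty$; the latter is where the main difficulty lies.

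Let $V_1, V_2$ be two solutions of \eqref{eq 24} decreasing in $t$, with exercise boundaries $s_1(t), s_2(t)$ supplied by the existence result of this section. I would first observe that the monotonicity of each $V_i$ in $t$ forces each stopping region $\{S: V_i(S,t)=\phi(S)\}=(0,s_i(t))$ to be nondecreasing in $t$, so each $s_i(t)$ is nondecreasing in $t$. Suppose for contradiction that $s_1(t_0)>s_2(t_0)$ at some $t_0$. Repeating slice-by-slice the pointwise calculation from the stationary proof: for $S<s_2(t_0)$ both solutions coincide with $\phi$; for $s_2(t_0)<S<s_1(t_0)$ we have $V_1=E-S$ while $V_2>E-S$ with $-LV_2=0$, giving $-L(V_1-V_2)=-L(E-S)=rE-qS\ge 0$ by Lemma 1 of this section (since $s_1(t_0)\le rE/q$); and for $S>s_1(t_0)$ both are in continuation, so $-L(V_1-V_2)=0$. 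This sign extends to the full space-time region $\Omega=\{(S,t): S>s_2(t)\}$, yielding $-L(V_1-V_2)\ge 0$ throughout $\Omega$.

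Next I apply the parabolic maximum principle (part 1 of Theorem 1 of this section) to $V_1-V_2$ on the truncated cylinder $\Omega_T=\Omega\cap\{0<t<T\}$: any nonpositive minimum is attained on the parabolic boundary, which has three pieces. On the curved lateral boundary $S=s_2(t)$ we have $V_1=V_2=\phi$; on the lateral boundary $S=\infty$ both solutions vanish; and on the top $t=T$ the values $V_i(S,T)$ converge, as $T\to\infty$, to the monotone limits $W_i(S):=\lim_{t\to\infty}V_i(S,t)$, which exist by boundedness and decrease. A viscosity-stability argument identifies each $W_i$ as a solution of the stationary variational inequality \eqref{eq 1}, so the stationary uniqueness theorem of Section 2 gives $W_1=W_2$, and the contribution from $t=T$ vanishes in the limit. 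One concludes $V_1-V_2\ge 0$ on $\Omega$, contradicting the fact that at $(s_1(t_0),t_0)\in\Omega$ we have $V_1=\phi<V_2$. Hence $s_1\equiv s_2$, and a final application of the parabolic maximum principle on the common continuation region, where $-L(V_1-V_2)=0$ with vanishing boundary data, forces $V_1\equiv V_2$.

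The main obstacle is the asymptotic step at $t=\infty$: identifying the monotone limit $W_i$ as a solution of the stationary variational inequality \eqref{eq 1} so that Section 2's uniqueness applies. Some stability statement for variational inequalities under monotone limits must be invoked, which is not developed explicitly in the preceding material; one would either adapt the viscosity-solution machinery from the convergence theorem of Section 2.3, or attempt a direct time-shift argument to show that any decreasing solution of \eqref{eq 24} is in fact independent of $t$, thereby reducing the claim to the stationary uniqueness.
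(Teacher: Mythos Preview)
Your overall strategy is sound, but you are overlooking the tool the paper introduced specifically for this situation. You invoke part~1) of the parabolic maximum principle on a truncated space--time cylinder $\Omega_T$, which forces you to control the top boundary $t=T$ and then send $T\to\infty$ --- the step you correctly flag as the main obstacle and leave essentially open.

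The paper instead applies part~2) of that same maximum-principle theorem: for a fixed time slice $A_t=(a,b)\times\{t\}$, if $-LV\le 0$ on $A_t$ and $V$ is decreasing in $t$ (so that $-V_t\ge0$ at every point of the slice), then the nonnegative maximum of $V$ over that slice is attained at the spatial endpoints $S=a$ or $S=b$. The whole point of this slice-wise principle is to absorb the time derivative via the monotonicity hypothesis, reducing the parabolic operator to the elliptic BSOD operator of Section~2 on each fixed-$t$ slice. The uniqueness argument then runs entirely at a single fixed $t$: with $-L(V_2-V_1)\le 0$ on $(s_2(t),\infty)\times\{t\}$ and $V_2-V_1$ vanishing at both spatial ends $S=s_2(t)$ and $S=\infty$, one concludes $V_2-V_1\le 0$ on that slice, contradicting $V_2>V_1=\phi$ at $S=s_1(t)$. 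No truncation in $t$, no identification of large-time limits with the stationary problem, and no appeal to the viscosity-stability machinery of Section~2.3 is required. Your ``main obstacle'' is precisely what part~2) was designed to sidestep.
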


\begin{proof}
Let $V_1,V_2$ be the two solutions to $\eqref{eq 24}$ which are decreasing on $t$ and   be the exercise boundaries of two solutions, respectively. Fix $t>0$. Without loss of generality, assume that $s_1(t)>s_2(t)$. Then if $S<s_2(t)$, then $V_1=V_2=E-S$, and if $S>s_1(t)$ then $LV_1=LV_2=0$. If $s_2(t)<S\le{s_1(t)}$ then we have
\begin{equation}\label{eq **}
V_1=E-S,~~LV_2=0,~~V_2>E-S.
\end{equation}
Now consider $V_2-V_1$ in the interval $(s_2(t),\infty)$. When $s_2(t)<S\le{s_1(t)}$, we have
$$
-L(V_2-V_1)=L(E-S)=qS-rE<0~~(s<s_1(t)\le{rE/q}).
$$ 
and when $S>s_1(t)$, we have $-L(V_2-V_1)=0$, so finally we have $-L(V_2-V_1)\le0$ in the interval $(s_2(t),\infty)$.
From 2) of Theorem 1, the nonnegative maximum value of $V_2-V_1$ is attained at the boundary in the interval $[s_2(t),\infty)$. But we have $(V_2-V_1)(s_2(t),t)=0$ and $(V_2-V_1)(\infty,t)=0$  in $[s_2(t),\infty)$. Thus we have $V_2-V_1\le0$ in the interval $[s_2(t),\infty)$. This contradicts $\eqref{eq **}$. So we have $s_1(t)=s_2(t),\forall{t>0}$.\\
Now let $s(t)$ be the exercise boundary. Two solutions are equal to $E-S$ on the interval $(0,s(t))$. In the interval $(s_1(t),\infty)$, we have $-L(V_1-V_2)=0$ and $V_1-V_2|_{s(t),+\infty}=0$, so we have $V_1=V_2$ from Theorem 1(QED).
\end{proof}

%%%%%   .3.2 
\subsection{The limit of the BTM price of American Option.}

\indent

Suppose that $T>0$ is the maturity. Let $N$ be the number of partition intervals, $\Delta{t}=T/N$ the length of the partition interval and $t_n=n\Delta{t},~~n=0,1,\dots,N.$  Especially $T=t_N$.\\
Let
\begin{equation}\label{eq 26}
u=e^{\sigma\sqrt{\Delta{t}}},~\theta=\frac{\rho/\eta-d}{u-d},~\rho=1+r\Delta{t},~\eta=1+q\Delta{t},~ud=1
\end{equation}

Here $r, q$ and $\sigma$ are interest rate, dividend rate and volatility, respectively. In BTM we let  $S_j=S_0u^j,~j\in{Z}$. Denote by $V_j^n$ the American option price at time $t_n$ with underlying asset value $S_j$. Then American option price by BTM is as follows \cite{JD, Jia}:
\begin{eqnarray}\label{eq 27}
&&V_j^N=\varphi,~~~j\in{Z}\nonumber\\
&&V_j^{k-1}=\max\left\{\frac{1}{\rho}\left(\theta{V_{j+1}^k}+(1-\theta){V_{j-1}^k}\right),\phi_j\right\},~~~k=N,N-1,\dots,1
\end{eqnarray}
Here $\phi_j=(S_j-E)^+$ (call) or $\phi_j=(E-S_j)^+$(put).
\begin{theorem}
Let $V_j^n(T)~(j\in{Z})$ be the price by BTM of American option with the maturity $T$. Then $V_j^n(T)$ is increasing on $T$. That is, if $T_1=t_N<T_2=t_M$, then
\begin{equation}\label{eq 28}
V_j^n(T_1)\le{V_j^n(T_2)}~~~n=0,1,\dots,N.
\end{equation}
\end{theorem}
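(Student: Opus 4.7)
My plan is to prove \eqref{eq 28} by backward induction on $n$, starting from $n=N$ and descending to $n=0$. The essential observation is that, under the standard BTM parameter choice \eqref{eq 26}, the weights $\theta$ and $1-\theta$ are nonnegative for sufficiently small $\Delta t$, so the one-step operator
\[
\mathcal{T}_j(W) = \max\left\{\frac{1}{\rho}\bigl(\theta W_{j+1} + (1-\theta)W_{j-1}\bigr),\; \phi_j\right\}
\]
is monotone: if $W \le \widetilde{W}$ componentwise, then $\mathcal{T}_j(W) \le \mathcal{T}_j(\widetilde{W})$ for every $j$.

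For the base case $n=N$, I will use the fact that the BTM recursion always dominates the intrinsic payoff: because of the $\max\{\cdot,\phi_j\}$ in \eqref{eq 27}, one has $V_j^k(T_2) \ge \phi_j$ for every $k \in \{0,1,\dots,M\}$. Taking $k=N$ (which is legitimate since $N<M$) gives
\[
V_j^N(T_2) \ge \phi_j = V_j^N(T_1),\qquad j\in\mathbf{Z},
\]
where the equality is the terminal condition for the $T_1$-problem.

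For the inductive step, assume $V_j^n(T_1)\le V_j^n(T_2)$ for all $j\in\mathbf{Z}$ and some $n\in\{1,\dots,N\}$. Applying the operator $\mathcal{T}_j$ to each side and invoking its monotonicity together with the identity
\[
V_j^{n-1}(T_i) = \mathcal{T}_j\bigl(V^n(T_i)\bigr),\qquad i=1,2,
\]
yields $V_j^{n-1}(T_1)\le V_j^{n-1}(T_2)$. Iterating down to $n=0$ completes the proof.

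The only delicate point is verifying that $\theta\in[0,1]$, i.e.\ that $\rho/\eta\in[d,u]$, so that the convex-combination interpretation behind monotonicity of $\mathcal{T}_j$ is valid; this is the standard admissibility condition for the BTM and holds automatically once $\Delta t$ is small enough, given $r,q,\sigma$ and the definitions in \eqref{eq 26}. Everything else is bookkeeping on indices, and no new analytic machinery is needed. (QED)
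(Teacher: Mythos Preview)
Your proof is correct and follows essentially the same route as the paper's: backward induction on $n$ starting at $n=N$, using monotonicity of the one-step BTM operator $\mathcal{T}_j$. The only cosmetic difference is that the paper first treats the special case $M=N+1$ and then chains $V_j^n(t_N)\le V_j^n(t_{N+1})\le\cdots\le V_j^n(t_{N+r})$, whereas you observe directly that $V_j^N(T_2)\ge\phi_j$ for any $M>N$, which lets you run a single induction without the intermediate chaining step.
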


\begin{proof}
1) First, assume that $M=N+1$. From $\eqref{eq 27}$, $V_j^N(T_1)=\phi_j,V_j^{N+1}(T_2)=\phi_j$, and we have
\begin{equation*}
V_j^N(T_2)=\max\left\{\frac{1}{\rho}\left(\theta\phi_{j+1}+(1-\theta)\phi_{j-1}\right),\phi_{j}\right\}.
\end{equation*}
So we have $V_j^N(T_2)\ge\phi_j=V_j^N(T_1)$. That is, when $n=N$, we have $\eqref{eq 28}$.
Assume that we have $\eqref{eq 28}$ when $n=k$, then we will prove $\eqref{eq 28}$ when $n=k-1$.
That is, we will prove that if $V_j^k(T_1)\le{V_j^k(T_2)}$ then $V_j^{k-1}(T_1)\le{V_j^{k-1}(T_2)}$.In fact
\begin{eqnarray*}
V_j^{k-1}(T_1)&=&\max\left\{\frac{1}{\rho}\left(\theta{V_{j+1}^k(T_1)}+(1-\theta)V_{j-1}^k(T_1)\right),\phi_j\right\}\\
&\le&\max\left\{\frac{1}{\rho}\left(\theta{V_{j+1}^k(T_2)}+(1-\theta)V_{j-1}^k(T_2)\right),\phi_j\right\}=V_j^{k-1}(T_2).
\end{eqnarray*} 
2) Now let us prove when $0=t_0<t_1<\cdots{t_N}=T_1<t_{N+1}<\cdots<t_{N+r}=T_2$ in general.
From the result of 1), when $n=0,1,\dots,N$, we have
$$V_j^n(T_1)=V_j^n(t_N)\le{V_j^n(t_{N+1})}\le{V_j^n({t_{N+2}})}\le\cdots\le{V_j^n{(t_{N+r}})}=V_j^n(T_2),\forall{j\in{Z}}.$$
So the required result is proved. (QED)
\end{proof}
{\\}

{\bf Remark 1.} 
The price by BTM of American call option $\phi_j=(S_j-E)^+$ satisfies $\phi_j\le{V_j^n(T)}\le{S}$ and is increasing on $S$, and decreasing on $t$ ($V_j^n\le{V_{j+1}^n}$ and $V_j^n\le{V_j^{n-1}}$). The price by BTM of American put option $(\phi_j=(E-S_j)^+)$ satisfies $\phi_j\le{V_j^n(T)\le{E}}$ and is decreasing on $S$ and $t$ ($V_{j+1}^n\le{V_j^n}$ and $V_j^n\le{V_j^{n-1}}$). And $V_{-\infty}^n(T)=E,V_{+\infty}^n(T)=0$ [8, 17]

So the price by BTM of American option converges when $T\rightarrow\infty$. We denote this limit by $U_j^n(j\in{Z},n=0,1,\dots)$. That is,
\begin{equation}\label{eq 29}
U_j^n=\lim_{T\rightarrow\infty}V_j^n(T),~~\forall{j\in{Z}},~n=0,1,\dots.
\end{equation}

\begin{lemma}
The limit of the BTM price of call option is increasing on $S(j)$, and is  decreasing on $t(n)$ and the limit of the BTM price of put option is decreasing on $S$ and $t$. These limit prices have the same boundedness as Remark 1.
\end{lemma}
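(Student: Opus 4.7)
The plan is to derive every property of $U_j^n$ by passing to the limit $T\to\infty$ in the corresponding property of the BTM price $V_j^n(T)$ recorded in Remark 1 and Theorem 3.4.

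First I observe that the pointwise limit in $\eqref{eq 29}$ actually exists (and equals the supremum over $T$) for every fixed $(j,n)$: by Theorem 3.4, $V_j^n(T)$ is monotone nondecreasing in $T$, and by the boundedness part of Remark 1 it is bounded above by $S_j$ in the call case and by $E$ in the put case; hence the monotone convergence of real numbers gives a finite limit $U_j^n$. In particular, passing to the limit in $\phi_j\le V_j^n(T)\le S_j$ and $\phi_j\le V_j^n(T)\le E$ immediately yields the asserted bounds $\phi_j\le U_j^n\le S_j$ (call) and $\phi_j\le U_j^n\le E$ (put). The boundary relations $U_{-\infty}^n=E$, $U_{+\infty}^n=0$ in the put case (and the corresponding call-side limits) are inherited in the same way from the finite-$T$ identities in Remark 1.

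Next I handle the monotonicity in the spatial index $j$ and time index $n$. For each fixed $T$, Remark 1 asserts $V_j^n(T)\le V_{j+1}^n(T)$ (call) or $V_{j+1}^n(T)\le V_j^n(T)$ (put), and $V_j^n(T)\le V_j^{n-1}(T)$ in both cases. Taking $T\to\infty$ along any sequence $T_k\to\infty$ on both sides of each inequality, the limits exist by the previous paragraph, and the inequality is preserved. This gives $U_j^n\le U_{j+1}^n$ (call), $U_{j+1}^n\le U_j^n$ (put), and $U_j^n\le U_j^{n-1}$ in both cases, which are exactly the monotonicity statements of the lemma.

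I do not foresee a genuine obstacle here, since the argument is a routine transfer of inequalities through a monotone bounded limit; the only subtlety worth double-checking is that the limit is taken along a common sequence of maturities for the two sides of each inequality, so that the limiting values on each side refer to the same $U_j^n$ and $U_{j\pm 1}^n$ (or $U_j^{n-1}$). This is clearly possible since each of the (countably many) finite-$T$ inequalities we use involves only two points at which the limit is being taken, and at each such point the limit is independent of the extracting sequence by monotonicity in $T$. This completes the proof. (QED)
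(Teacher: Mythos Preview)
Your proposal is correct and follows exactly the approach the paper intends: the paper states this lemma without proof, treating it as an immediate consequence of passing to the limit $T\to\infty$ in the monotonicity and boundedness properties recorded in Remark~1, together with the monotonicity in $T$ from the preceding theorem. Your write-up simply makes explicit the routine argument that non-strict inequalities are preserved under monotone bounded limits, which is all that is needed here.
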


{\bf Time independence of the limit $U_j^n$ of the put option BTM price.}

From (2), the limit $U_j^n$ of the BTM price of put option satisfies the following difference equation.
\begin{eqnarray}\label{eq 30}
&&U_j^k=\max\left\{\frac{1}{\rho}\left(\theta{U_{j+1}^{k+1}}+(1-\theta)U_{j-1}^{k+1}\right),\phi_j\right\},~~j\in{Z},~k=0,1,\dots\nonumber\\
&&U_{-\infty}^k=E,~~U_\infty^k=0.
\end{eqnarray}

Denote by $l_\infty(Z)$ the Banah space formed with two-sided sequence of real numbers. For ${\bf U}\in{l_\infty(Z)}$, the norm $\Vert{\bf U}\Vert$ is defined as follows:  
$$
\Vert{\bf U}\Vert:=\sup_{j\in{Z}}|U_j|.
$$
Define the operator $B:l_\infty(Z)\rightarrow{l_\infty(Z)}$ as follows:
\begin{equation}\label{eq 31}
(B{\bf U})_j=\max\left\{\frac{1}{\rho}\left(\theta{U_{j+1}}+(1-\theta)U_{j-1}\right),\phi_j\right\},~~j\in{Z},~{\bf U}\in{l_\infty(Z)}.
\end{equation}
Then
\begin{eqnarray}\label{eq 32}
\Vert{B{\bf U}-B{\bf V}}\Vert&=&sup_j\left\{\max\left[\frac{1}{\rho}\left(\theta{U_{j+1}}+(1-\theta)U_{j-1}\right),\phi_j\right]-\max\left[\frac{1}{\rho}\left(\theta{V_{j+1}}+(1-\theta)V_{j-1}\right),\phi_j\right]\right\}\nonumber\\
&\le&sup_j\left\{\max\left[\frac{1}{\rho}\left(\theta({U_{j+1}}-V_{j+1})+(1-\theta)(U_{j-1}-V_{j-1})\right),0\right]\right\}\\
&\le&\max\left[\frac{1}{\rho}\left(\theta{sup_j}|U_{j+1}-V_{j+1}|+(1-\theta)sup_j|U_{j-1}-V_{j-1}|\right),0\right]\nonumber\\
&\le&\frac{1}{\rho}\Vert{{\bf U}-{\bf V}}\Vert.\nonumber
\end{eqnarray}
Thus we proved the following theorem.

\begin{lemma}
Operator $B:l_\infty(Z)\rightarrow{l_\infty(Z)}$ is contraction mapping in Banach space and has a unique fixed point when $r>0$.
\end{lemma}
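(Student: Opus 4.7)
The plan is to read off the statement as a direct application of the Banach fixed point theorem, since the contraction estimate has essentially already been computed in display~\eqref{eq 32} immediately preceding the lemma.

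First I would verify that $B$ genuinely maps $l_\infty(Z)$ into itself. Given $\mathbf{U}\in l_\infty(Z)$ with $\|\mathbf{U}\|=M$, the assumption $0\le\theta\le 1$ (which is part of the standing BTM setup from \eqref{eq 26}) makes the convex combination $\theta U_{j+1}+(1-\theta)U_{j-1}$ bounded in absolute value by $M$, so the first argument of the $\max$ in \eqref{eq 31} is bounded by $M/\rho$. Since $\phi_j=(E-S_j)^+\le E$ for the put payoff, we obtain $\|B\mathbf{U}\|\le\max\{M/\rho,E\}<\infty$, so $B\mathbf{U}\in l_\infty(Z)$.

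Next I would invoke \eqref{eq 32}, which was derived just before the statement using the elementary inequality $|\max(a,c)-\max(b,c)|\le|a-b|^+$ together with $\theta,1-\theta\ge 0$. This gives
\[
\|B\mathbf{U}-B\mathbf{V}\|\le\frac{1}{\rho}\|\mathbf{U}-\mathbf{V}\|,\qquad \mathbf{U},\mathbf{V}\in l_\infty(Z).
\]
Because $r>0$ and $\rho=1+r\Delta t$, the contraction constant satisfies $1/\rho<1$ strictly, so $B$ is a strict contraction on the complete metric space $l_\infty(Z)$.

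Finally, the Banach fixed point theorem yields a unique $\mathbf{U}^\ast\in l_\infty(Z)$ with $B\mathbf{U}^\ast=\mathbf{U}^\ast$, which is the conclusion of the lemma. There is no real obstacle here: the only subtle point is the well-definedness verification above, which requires $\theta\in[0,1]$ and the boundedness of the put payoff $\phi_j$; both are in force under the standing BTM hypotheses. The hypothesis $r>0$ is used exactly to ensure $1/\rho<1$, i.e.\ to make the contraction strict rather than merely non-expansive.
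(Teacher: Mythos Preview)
Your proposal is correct and matches the paper's approach exactly: the paper's entire proof is the computation~\eqref{eq 32} followed by ``Thus we proved the following theorem,'' and you have simply made explicit the well-definedness check and the invocation of the Banach fixed point theorem that the paper leaves implicit. The key contraction estimate and the role of $r>0$ via $\rho=1+r\Delta t>1$ are identical in both.
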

Now consider the time independence of $U_j^n$ defined by $\eqref{eq 29}$. Denote by
$$
{\bf U}^n=\left\{U_j^n\right\}_{j\in{Z}}\in{l_\infty(Z)}.
$$
From $\eqref{eq 30}$, ${\bf U}^k=B{\bf U}^{k+1},~~k=0,1,\cdots$ and thus using ($\eqref{eq 32}$) repeatedly then we have 
$$
\Vert{{\bf U}^0-{\bf U}^1}\Vert=\Vert{B{\bf U}^1-B{\bf U}^2}\Vert\le\frac{1}{\rho}\Vert{{\bf U}^1-{\bf U}^2}\Vert\le\frac{1}{\rho^2}\Vert{{\bf U}^2-{\bf U}^3}\Vert\le\dots\le\frac{1}{\rho^n}\Vert{{\bf U}^n-{\bf U}^{n+1}}\Vert.
$$

So we have $\Vert{{\bf U}^0-{\bf U}^1}\Vert\le{2E/\rho^n}$ because $\Vert{{\bf U}^n-{\bf U}^{n+1}}\Vert\le{2E}$. thus we have ${{\bf U}^0-{\bf U}^1}$, and we have ${{\bf U}^k-{\bf U}^{k+1}},~k=0,1,\cdots$ in the same way. Thus we proved the following theorem.

\begin{theorem}
The limit $U_j^n$ of the BTM price of American put option is independent on $n$. Thus $U_j=U_j^n$ is the solution to the following problem.
\begin{eqnarray}\label{eq 33}
&&\max\left\{\frac{1}{\rho}\left(\theta{U_{j+1}}+(1-\theta)U_{j-1}\right),\phi_j\right\},~~j\in{Z}.\nonumber\\
&&U_{-\infty}=E,~~~U_\infty=0
\end{eqnarray}
\end{theorem}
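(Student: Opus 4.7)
The plan is to exploit the contraction estimate (32) for the operator $B$ together with the uniform bound $\|{\bf U}^n\| \le E$ noted in Remark 1. The key observation is that the sequence $\{{\bf U}^n\}_{n\ge 0}\subset l_\infty(Z)$ is not merely Cauchy in $n$; because $\rho>1$, iterating the contraction "backwards in $n$" forces consecutive terms to coincide exactly.

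First I would pass to the limit $T\to\infty$ in the backward recursion (27). By Theorem 3.4 (monotonicity in $T$) and Remark 1 (the bound $V_j^n(T)\le E$), the sequence $V_j^n(T)$ converges coordinate-wise to $U_j^n$ as $T\to\infty$. Since the operations defining $B$ in (31) are just a max and an affine combination of two neighboring coordinates, they commute with this pointwise limit, and taking $T\to\infty$ in (27) yields ${\bf U}^k = B\,{\bf U}^{k+1}$ for every $k\ge 0$. Next, applying the contraction estimate (32) repeatedly,
$$
\|{\bf U}^0-{\bf U}^1\|=\|B{\bf U}^1-B{\bf U}^2\|\le \frac{1}{\rho}\|{\bf U}^1-{\bf U}^2\|\le\cdots\le \frac{1}{\rho^n}\|{\bf U}^n-{\bf U}^{n+1}\|.
$$
Since $0\le U_j^k\le E$, we have $\|{\bf U}^n-{\bf U}^{n+1}\|\le 2E$, so $\|{\bf U}^0-{\bf U}^1\|\le 2E/\rho^n$. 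Letting $n\to\infty$, and using $\rho=1+r\Delta t>1$, gives ${\bf U}^0={\bf U}^1$. Starting the same argument at index $k$ in place of $0$ shows ${\bf U}^k={\bf U}^{k+1}$ for every $k$, so $U_j^n=U_j$ is independent of $n$. Substituting into the relation ${\bf U}^k=B{\bf U}^{k+1}$ and reading off the boundary values $U_{-\infty}=E$, $U_{+\infty}=0$ from Remark 1 produces the fixed-point system (33).

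The only nontrivial point is justifying the identity ${\bf U}^k=B{\bf U}^{k+1}$ in $l_\infty(Z)$ from the coordinate-wise limit $V_j^n(T)\to U_j^n$; but because $B$ is defined locally via continuous operations on finitely many coordinates, this is immediate. Everything else is a routine application of the contraction estimate already proved in (32), so I do not anticipate any real obstacle; the content of the theorem is essentially the observation that $B$ is a strict contraction with a factor $<1$ acting on a uniformly bounded sequence satisfying ${\bf U}^k=B{\bf U}^{k+1}$.
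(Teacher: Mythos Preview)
Your proposal is correct and follows essentially the same route as the paper: establish ${\bf U}^k=B{\bf U}^{k+1}$ from the limit in (30), iterate the contraction estimate (32) to get $\|{\bf U}^0-{\bf U}^1\|\le 2E/\rho^n$, and let $n\to\infty$. Your added remarks on why the coordinate-wise limit commutes with $B$ and on reading off the boundary conditions are a bit more explicit than the paper, but the argument is the same.
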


{\bf Remark 2.} So the limit of the BTM price of American put option is the BTM price of perpetual American option \cite{LL}.

%%%%%%%%     3.3
\subsection{The Limit of the price by Explicit Difference Scheme of variational inequality of American option.}

\indent

Let $V(S,t,T)$ be the price of American option with expiry date $T$, that is, the solution to the  variational inequality
\begin{eqnarray}\label{eq 34}
&&\min\left\{-\frac{\partial{V}}{\partial{t}}-\frac{\sigma^2}{2}S^2\frac{\partial^2V}{\partial{S^2}}-(r-q)S\frac{\partial{V}}{\partial{S}}+rV,V-\phi\right\}=0,~~0<t<T,~S>0,\nonumber\\
&&V(T,S)=\phi(S)=(E-S)^+~~or~~(S-E)^+.
\end{eqnarray}

Then $V(S,t,T)$ is increasing on $T$ \cite{Jia}. When $\Delta{t}=T/N,\Delta{x}>0$, let
$$
S_j=S_0{e^{j\Delta{x}}}=e^{j\Delta{x}+c},~~j\in{Z},~t_n=n\Delta{t}.
$$
Denote
$$
U_j^n=V(S_j,t_n),~w=\frac{\sigma^2\Delta{t}}{\Delta{x}^2},~a=\frac{1}{2}+(r-q-\frac{\sigma^2}{2})\frac{\Delta{x}}{2\sigma^2},~\rho=1+r\Delta{t}.
$$
Then the explicit difference scheme for $\eqref{eq 34}$ is privided as follows \cite{JD, Jia}:
\begin{eqnarray}\label{eq 35}
&&U_j^N=\phi_j,j\in{Z}\\
&&U_j^n=\max\left\{\frac{1}{\rho}\left\{(1-w)U_j^{n+1}+w\left[aU_{j+1}^{n+1}+(1-a)U_{j-1}^{n+1}\right]\right\},\phi_j\right\},~n=N-1,\cdots,1,0\nonumber
\end{eqnarray}

\begin{theorem}
Let $U_j^n(T)$ be the solution to the explicit difference scheme of American Option with the maturity $T$. Then $U_j^n(T)$ is increasing on $T$. That is, if $T_1=t_N<T_2=t_M$, then we have $U_j^n(T_1)\le{U_j^n(T_2)}$.
\end{theorem}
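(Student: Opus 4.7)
The plan is to mirror the induction argument already used for the BTM case in Theorem 3.2, adapted to the explicit difference operator appearing in \eqref{eq 35}. First I would fix $T_1=t_N$ and $T_2=t_{N+1}$ (i.e.\ the one-step extension) and then bootstrap to arbitrary $M>N$ by iterating.

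For the one-step extension, note that both schemes agree on the terminal condition $U_j^N(T_1)=\phi_j$ and $U_j^{N+1}(T_2)=\phi_j$. Comparing the values at time level $n=N$, I have $U_j^N(T_1)=\phi_j$ on one side and
\[
U_j^N(T_2)=\max\!\left\{\tfrac{1}{\rho}\bigl[(1-w)\phi_j+w(a\phi_{j+1}+(1-a)\phi_{j-1})\bigr],\,\phi_j\right\}\ge \phi_j
\]
on the other, simply because the $\max$ is at least $\phi_j$. So the desired inequality $U_j^N(T_1)\le U_j^N(T_2)$ holds at $n=N$.

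The inductive step is to show that if $U_j^{k}(T_1)\le U_j^{k}(T_2)$ for every $j\in \mathbb Z$, then the same inequality holds at level $k-1$. This is exactly the monotonicity of the operator $\mathbf F$ already established in Lemma 2.5 of Section 2: under the standing assumptions $0<w\le 1$ and $\bigl|\frac{\Delta x}{\sigma^{2}}(r-q-\frac{\sigma^{2}}{2})\bigr|\le 1$, the coefficients $(1-w)/\rho$, $wa/\rho$ and $w(1-a)/\rho$ are all nonnegative, so each entry of $(\mathbf F\mathbf U)_j$ is nondecreasing in each $U_{j'}$. Applying $\mathbf F$ componentwise to the inequality $\mathbf U^{k}(T_1)\le \mathbf U^{k}(T_2)$ and observing that both $U_j^{k-1}(T_1)$ and $U_j^{k-1}(T_2)$ are produced by the same operator applied to the respective time-$k$ vectors yields $U_j^{k-1}(T_1)\le U_j^{k-1}(T_2)$.

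Having handled $M=N+1$, the general case $T_1=t_N<t_{N+1}<\cdots<t_{N+r}=T_2$ follows by chaining the previous step $r$ times, exactly as in the BTM proof: for every $n\le N$,
\[
U_j^n(T_1)=U_j^n(t_N)\le U_j^n(t_{N+1})\le \cdots \le U_j^n(t_{N+r})=U_j^n(T_2).
\]
The only delicate point, and hence the step that deserves care, is the monotonicity of $\mathbf F$; once the stability/coefficient conditions that already appear in Section 2 are invoked, the rest is a straightforward induction and the result is established.
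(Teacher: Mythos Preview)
Your proof is correct and follows essentially the same approach as the paper: establish the one-step case $M=N+1$ by comparing at level $N$ and then inducting downward via the monotonicity of the scheme, and finally chain the one-step inequalities for general $M>N$. If anything, you are slightly more careful than the paper in explicitly invoking the stability conditions $0<w\le 1$ and $\bigl|\frac{\Delta x}{\sigma^{2}}(r-q-\frac{\sigma^{2}}{2})\bigr|\le 1$ needed for the coefficients to be nonnegative, which the paper's proof uses tacitly.
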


\begin{proof}
1) First, let $M=N+1$, that is,
$$
0=t_0<t_1<\cdots{t_N}=T_1<t_{N+1}<T_2=t_{N+2}.
$$ 
Then $U_j^N(T_1)=\phi_j,U_j^{N+1}(T_2)=\phi_j$ and we have
\begin{eqnarray*}
&&U_j^N(T_1)=\max\left\{\frac{1}{\rho}\left\{(1-w)U_j^{N+1}+w\left[aU_{j+1}^{N+1}+(1-a)U_{j-1}^{N+1}\right]\right\},\varphi_j\right\}\\
&&\ge\varphi=U_j^N(T_1).
\end{eqnarray*}
So when $n=N$, we have the result of the theorem. Assume that $U_j^n(T_1)\le{U_j^n(T_2)}$ when $n=k$, and we will prove when $n=k-1$.
\begin{eqnarray*}
U_j^{k-1}(T_1)&=&\max\left\{\frac{1}{\rho}\left\{(1-w)U_j^k(T_1)+w\left[aU_{j+1}^k(T_1)+(1-a)U_{j-1}^k(T_1)\right]\right\},\varphi_j\right\}\\
&\le&\max\left\{\frac{1}{\rho}\left\{(1-w)U_j^k(T_2)+w\left[aU_{j+1}^k(T_2)+(1-a)U_{j-1}^k(T_2)\right]\right\},\varphi_j\right\}\\
&=&U_j^k(T_2)~\forall{j\in{Z}}.
\end{eqnarray*}
2) In general, consider the case when
$$
0=t_0<t_1<\cdots<t_N=T_1<t_{N+1}<\cdots<t_{N+r}=T_2,~~r\in{N}.
$$
From the result of 1), we have 
$$
U_j^n(T_1)=U_j^n(t_N)\le{U_j^n(t_{N+1})}\le{U_j^n(t_{N+2})}\le\cdots\le{U_j^n(t_{N+r})}=U_j^n(T_2),~~\forall{j\in{Z}}.
$$
Thus, we have proves the required result.(QED)
\end{proof}
{\\}

{\bf Remark 3.}
The price by explicit difference scheme of American call option $(\phi_j=(S_j-E)^+)$ satisfies $\phi_j\le{U_j^n(T)\le{S_j}}$ and is increasing on $S$ and decreasing on $t$ $( U_j^n\le{U_{j+1}^n}~and~U_j^n\le{U_j^{n-1}})$. The price by explicit difference scheme of American put option $\phi_j=(E-S)^+$ satisfies $\phi_j\le{U_j^n(T)\le{E}}$ and is decreasing on $S$ and $t$ $(U_{j+1}^n\le{U_j^n},U_j^n\le{U_j^{n-1}})$, and $U_{-\infty}^n(T)=E,~U_{+\infty}^n(T)=0$ \cite{JD, Jia}.
{\\}

Thus the prices by explicit difference scheme of American options with the maturity $T$ converge when $T\rightarrow\infty$. Denote the limit by $W_j^n$($j\in{Z},~n=0,1,\dots$ ). That is,
\begin{equation}\label{eq 36}
W_j^n=\lim_{T\rightarrow\infty}U_j^n(T),~~\forall{j\in{Z}},~n=0,1,\cdots
\end{equation}
{\\}
The limit of the price by explicit difference scheme of American call option is increasing on $S(j)$ and is decreasing on $t(n)$ and the limit of put option is decreasing on $S$ and $t$. They have the same boundedness as pointed in Remark 3.
{\\}

From $\eqref{eq 35}$, the limit $W_j^n$ of the price by explicit difference scheme of American put option $(\phi_j=(E-S_j)^+)$ satisfies the following difference equation.
\begin{eqnarray}\label{eq 37}
&&W_j^n=\max\left\{\frac{1}{\rho}\left\{(1-w)W_j^{n+1}+w\left[aW_{j+1}^{n+1}+(1-a)W_{j-1}^{n+1}\right]\right\},\phi_j\right\},~~j\in{Z},~n=0,1,\dots.\nonumber\\
&&W_{-\infty}^k=E,~~W_{\infty}^k=0.
\end{eqnarray}
This is just the difference equation of \eqref{eq 24}, the variational inequality on infinite time interval. 
Define an operator $F:l_\infty(Z)\rightarrow{l_\infty(Z)}$ as follows:
\begin{equation}\label{eq 38}
(F{\bf U})_j=\max\left\{\frac{1}{\rho}\left\{(1-w)U_j+w\left[aU_{j+1}+(1-a)U_{j-1}\right]\right\},\phi_j\right\},~~j\in{Z},~{\bf U}\in{l_\infty{Z}}.
\end{equation}
Then we have
\begin{eqnarray}\label{eq 39}
&&\Vert{F{\bf U}-F{\bf V}}\Vert=sup_j\left\{\max\left\{\frac{1}{\rho}\left\{(1-w)U_j+w\left[aU_{j+1}+(1-a)U_{j-1}\right]\right\},\phi_j\right\}-\right.\nonumber\\
&&~~~~~~~~~~~~~~~~~-\max\left.\left\{\frac{1}{\rho}\left\{(1-w)V_j+w\left[aV_{j+1}+(1-a)V_{j-1}\right]\right\},\phi_j\right\}\right\}\nonumber\\
&&\le{sup_j}\left\{\max\left[\frac{1}{\rho}\left\{(1-w)(U_j-V_j)+w\left[a(U_{j+1}-V_{j+1})+(1-a)(U_{j-1}-V_{j-1})\right]\right\},0\right]\right\}\\
&&\le\max\left[\frac{1}{\rho}\left((1-w)sup_j|U_j-V_j|+w\left[asup_j|U_{j+1}-V_{j+1}|+(1-a)sup_j|U_{j-1}-V_{j-1}|\right]\right),0\right]\nonumber\\
&&\le{\frac{1}{\rho}\Vert{{\bf U}-{\bf V}}\Vert}\nonumber
\end{eqnarray}
So we proved the following lemma.
\begin{lemma}
The operator $F:l_\infty(Z)\rightarrow{l_{\infty}(Z)}$ is contraction mapping and has a unique fixed point when $r>0$.
\end{lemma}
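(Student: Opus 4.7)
The plan is to verify directly that $F$ is a contraction with Lipschitz constant $1/\rho < 1$ on the Banach space $l_\infty(Z)$, then invoke the Banach fixed point theorem. The groundwork has essentially already been laid by the computation in \eqref{eq 39}; the task is to justify it carefully and then extract the conclusion.

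First I would check that $F$ maps $l_\infty(Z)$ into itself. Since $\varphi_j = (E - e^{j\Delta x + c})^+ \le E$ and each $U_j$ is bounded, the linear combination $(1-w)U_j + w[aU_{j+1} + (1-a)U_{j-1}]$ is bounded in absolute value by $\|\mathbf{U}\|$ (provided the coefficients are nonnegative, see below), so $(F\mathbf{U})_j$ is bounded by $\max(\|\mathbf{U}\|/\rho, E)$ and hence $F\mathbf{U}\in l_\infty(Z)$.

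Next, the core estimate. Using the elementary inequality $|\max(x,c)-\max(y,c)|\le|x-y|$, I bound
$$|(F\mathbf{U})_j - (F\mathbf{V})_j| \le \frac{1}{\rho}\bigl|(1-w)(U_j-V_j) + w[a(U_{j+1}-V_{j+1}) + (1-a)(U_{j-1}-V_{j-1})]\bigr|.$$
Under the standing stability assumption $0<w\le 1$ together with $|r-q-\sigma^2/2|\Delta x/\sigma^2 \le 1$ (which forces $0\le a\le 1$, as already used for Lemma 5), the three weights $1-w$, $wa$, $w(1-a)$ are nonnegative and sum to $1$. Passing to absolute values inside and then taking $\sup_j$ yields
$$\|F\mathbf{U}-F\mathbf{V}\| \le \frac{1}{\rho}\bigl[(1-w) + wa + w(1-a)\bigr]\,\|\mathbf{U}-\mathbf{V}\| = \frac{1}{\rho}\|\mathbf{U}-\mathbf{V}\|.$$
Because $r>0$ gives $\rho = 1 + r\Delta t > 1$, the constant $1/\rho$ is strictly less than $1$, so $F$ is a strict contraction.

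Finally, since $(l_\infty(Z), \|\cdot\|)$ is a complete metric space, the Banach fixed point theorem immediately supplies a unique $\mathbf{U}^\infty\in l_\infty(Z)$ with $F\mathbf{U}^\infty = \mathbf{U}^\infty$. The only potentially delicate point in the argument is the nonnegativity of the three coefficients $1-w$, $wa$, $w(1-a)$, which is exactly what turns the linear combination into a convex combination and allows the supremum to factor cleanly; this is guaranteed by the stability hypotheses already present in the setup, so no new obstacle arises.
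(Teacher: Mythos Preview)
Your proposal is correct and follows essentially the same route as the paper: the contraction estimate is precisely computation \eqref{eq 39}, and the fixed point is then obtained from Banach's theorem. You are slightly more explicit than the paper in two respects---you spell out the elementary inequality $|\max(x,c)-\max(y,c)|\le|x-y|$ and you make the stability hypotheses $0<w\le1$, $0\le a\le1$ (needed so that $(1-w),\,wa,\,w(1-a)$ form a convex combination) explicit---but the substance of the argument is identical.
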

Now consider the time independence of $W_j^n$ defined by $\eqref{eq 37}$. Let 
${\bf W}^n=\{W_j^n\}_{j\in{Z}}\in{l_\infty(Z)}.$ Then from $\eqref{eq 30}$, we have ${\bf W}^n=F{\bf W}^{n+1},~n=0,1,\dots$ and thus using $\eqref{eq 39}$ repeatedly, we have 
$$
\Vert{{\bf W}^0-{\bf W}^1}\Vert=\Vert{F{\bf W}^1-F{\bf W}^2}\Vert\le\frac{1}{\rho}\Vert{{\bf W}^1-{\bf W}^2}\Vert\le\frac{1}{\rho^2}\Vert{{\bf W}^2-{\bf W}^3}\Vert\le\cdots\le\frac{1}{\rho^n}\Vert{{\bf W}^n-{\bf W}^{n+1}}\Vert.
$$
So we have $\Vert{{\bf W}^0-{\bf W}^1}\Vert\le{2E/\rho^n}\rightarrow0,~n\rightarrow\infty$ because $\Vert{{\bf W}^n-{\bf W}^{n+1}}\Vert\le{2E}$. Thus we have ${\bf W}^0={\bf W}^1$ and in the same way we have ${{\bf W}^n-{\bf W}^{n+1}},~~n=0,1,\dots$. Thus we proved the following theorem.
\begin{theorem}
The limit $W_j^n$ of the price by explicit difference scheme of American put option is independent on $n$. That is, $W_j=W_j^n$ is the solution to the following problem.
\begin{eqnarray}\label{eq 40}
&&W_j=\max\left\{\frac{1}{\rho}\left\{(1-w)W_j+w\left[aW_{j+1}+(1-a)W_{j-1}\right]\right\},\phi_j\right\},~~j\in{Z}\nonumber\\
&&W_{-\infty}=E,~W_{\infty}=0.
\end{eqnarray}
\end{theorem}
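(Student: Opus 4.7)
The plan is to recognize equation \eqref{eq 37} as the fixed-point-style recursion ${\bf W}^n = F\,{\bf W}^{n+1}$ for the operator $F$ defined in \eqref{eq 38}, and then exploit the contraction estimate of Lemma 3.4 to telescope consecutive differences down to zero. Since the limit prices $W_j^n$ inherit the uniform bound $0\le W_j^n\le E$ from the pre-limit prices $U_j^n(T)$ (as stated in Remark 3), each ${\bf W}^n$ lies in $l_\infty(Z)$ with $\|{\bf W}^n-{\bf W}^{n+1}\|\le 2E$ uniformly in $n$, so $F$ may indeed be applied to every term.

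First I would iterate the contraction inequality \eqref{eq 39}: starting from ${\bf W}^0-{\bf W}^1=F\,{\bf W}^1-F\,{\bf W}^2$ and applying Lemma 3.4 $n$ times gives
$$\|{\bf W}^0-{\bf W}^1\|\le\frac{1}{\rho}\|{\bf W}^1-{\bf W}^2\|\le\frac{1}{\rho^2}\|{\bf W}^2-{\bf W}^3\|\le\cdots\le\frac{1}{\rho^n}\|{\bf W}^n-{\bf W}^{n+1}\|\le\frac{2E}{\rho^n}.$$
Because $r>0$ forces $\rho=1+r\Delta t>1$ strictly, the right-hand side tends to $0$ as $n\to\infty$, so $\|{\bf W}^0-{\bf W}^1\|=0$, i.e.\ ${\bf W}^0={\bf W}^1$. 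Exactly the same argument, started from an arbitrary index $k$ instead of $0$, yields ${\bf W}^k={\bf W}^{k+1}$ for every $k\ge 0$; hence $W_j^n$ does not depend on $n$.

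Finally, setting $W_j:=W_j^n$ (well-defined by the previous step) and substituting $W_j^{n+1}=W_j^n=W_j$ into \eqref{eq 37} collapses the recursion to the single equation
$$W_j=\max\!\left\{\tfrac{1}{\rho}\!\left[(1-w)W_j+w\!\left(aW_{j+1}+(1-a)W_{j-1}\right)\right],\phi_j\right\},\quad j\in\mathbf{Z},$$
together with the boundary values $W_{-\infty}=E$, $W_{\infty}=0$ passed from \eqref{eq 37}; this is exactly \eqref{eq 40}.

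There is no real obstacle once Lemma 3.4 and the uniform bound are in hand: the only points that require care are (i) verifying strictness $\rho>1$ so the geometric factor $\rho^{-n}$ genuinely decays, which requires $r>0$ (consistent with the standing assumption), and (ii) confirming that the boundary conditions at $j=\pm\infty$ survive the limit $T\to\infty$, which follows from the fact that $U^n_{-\infty}(T)=E$ and $U^n_{\infty}(T)=0$ hold for every finite $T$ (Remark 3) and are therefore preserved under the pointwise limit \eqref{eq 36}.
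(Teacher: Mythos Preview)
Your proof is correct and follows essentially the same route as the paper: write ${\bf W}^n=F{\bf W}^{n+1}$, iterate the contraction estimate \eqref{eq 39} to get $\|{\bf W}^0-{\bf W}^1\|\le 2E/\rho^n\to 0$, conclude ${\bf W}^k={\bf W}^{k+1}$ for all $k$, and then collapse \eqref{eq 37} to \eqref{eq 40}. Your additional remarks on why $\rho>1$ and why the boundary conditions persist are welcome clarifications that the paper leaves implicit.
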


{\bf Remark 2.} So the limit of the price by explicit difference scheme of American put option becomes the price by the difference equation \eqref{eq 7} of perpetual American put option.

%%% References

%\begin{verbatim}
% References should be produced using yourdatabse.bib
% with bibliography style utmn.bst
% Then disable
%   % \bibliographystyle{utmn}
%   % \bibliography{yourdatabse} % yourdatabse.bib
% and replace it with the contents of .bbl file.
%\end{verbatim}

\end{document}